\newcommand{\comment}[1]{}
\newcommand{\n}{\noindent}
\newcommand{\be}{\begin{equation}}
\newcommand{\ee}{\end{equation}}
\renewcommand{\Im}{\operatorname{Im}\,}
\renewcommand{\Re}{\operatorname{Re}\,}
\newcommand{\diam}{\operatorname{diam}}
\newcommand{\slim}{\operatorname{s}-\lim}
\newcommand{\RE}{\mathbb R}
\newcommand{\CO}{\mathbb C}
\newcommand{\II}{\mathbb I}
\newcommand{\HH}{\mathcal H}
\newcommand{\OO}{\mathcal O}
\newcommand{\ve}{\varepsilon}
\newcommand{\de}{\delta}
\newcommand{\ga}{\gamma}
\newcommand{\bt}{\beta}
\newcommand{\al}{\alpha}
\newcommand{\la}{\lambda}
\newcommand{\om}{\omega}
\newcommand{\si}{\sigma}
\newcommand{\Ga}{\Gamma}
\newtheorem{theorem}{Theorem}
\newtheorem{proposition}{Proposition}
\newtheorem{definition}{Definition}
\theoremstyle{remark}
\newtheorem{remark}{Remark}
\newcommand{\XX}{\chi}
\newcommand{\hh}{\hat H}
\newcommand{\hR}{\hat R}
\newcommand{\BB}{\mathcal B}
\title{Resonances in  Models of Spin Dependent Point Interactions}
\author{C. Cacciapuoti$^1$, R. Carlone$^2$, R. Figari$^3$}
\date{}
\begin{document}

\maketitle
\begin{center}
\vspace{0,5cm}
$^1$Czech Technical University, Doppler Institute.\\
B\v{r}ehov\'a 7, 11519 Prague, Czech Republic.\\
\vspace{0,5cm}
$^2$Doppler Institute,\\
Nuclear Physics Institute, Czech Academy of Sciences.\\
25068 \v{R}e\v{s} near Prague, Czech Republic.\\
\vspace{0,5cm}
$^3$Istituto Nazionale di Fisica Nucleare (INFN), Sezione di Napoli.\\
Dipartimento di Scienze Fisiche, Universit\`a di Napoli Federico II.\\
Via Cintia 80126 Napoli, Italy.\\
\end{center}
\n E-mail: cacciapuoti@ujf.cas.cz, carlone@ujf.cas.cz, figari@na.infn.it.

\vspace{1cm}

\begin{abstract}
In dimension $d=1,2,3$ we define a family of two-channel Hamiltonians obtained as point perturbations of the generator of the free decoupled dynamics. Within the family we choose two Hamiltonians,  $\hat H_0$ and $\hat H_\ve$, giving rise respectively to the unperturbed and to the perturbed evolution.  The Hamiltonian $\hat H_0$ does not couple the channels and has an eigenvalue embedded in the continuous spectrum. The Hamiltonian  $\hat H_\ve$ is a small perturbation, in resolvent sense, of $\hat H_0$ and exhibits a  small coupling between the channels.

We take advantage of the complete solvability of our model to prove with simple arguments that the embedded eigenvalue of $\hat H_0$ shifts into a resonance for $\hat H_\ve$. In dimension three  we analyze details of the time behavior of the  projection onto the region of the spectrum close to the resonance.   
\end{abstract}

PACS numbers: 03.65.Db, 02.30.Tb\\
Mathematics Subject Classification: 81Q10, 30B40, 32D15

\section{Introduction\label{introduction}}
Aim of the paper is to analyze a completely solvable quantum system where formation and time decay properties of a metastable state can be explicitly investigated. The main technical tool we shall make use of is the theory of point interaction Hamiltonians in a slightly generalized setting. The Hamiltonians we define in the paper appeared frequently, in different forms, in the chemical and physical literature. In \cite{DO88} they are extensively utilized, without a formal classification, to compute scattering data for low energy inelastic scattering of electrons by atoms (see also \cite{LY02} for more recent applications of those ideas). In the theory of neutron scattering by nuclei zero range potentials with spin dependent strength were often exploited (see, e.g., \cite{Lov84}). In the framework of the modern theory of point interaction Hamiltonians (see  \cite{AGH-KH05} for a thorough  introduction)  particles with spin coupled with internal degrees of freedom and/or with  magnetic fields were investigated by various authors (\cite{BGP07},  \cite{ES07} and references therein).

Much closer to ours are the works of C. King (\cite{Kin91}, \cite{Kin94}) (whose relation with what is done here will be clarified in the last section), and of P. Exner \cite{Exn91}. At the best of our knowledge the latter author was the only one to characterize multichannel point interaction Hamiltonians in a rigorous way.  In  \cite{Exn91} he analyzes thoroughly  scattering theory for a two channel point interaction in three dimensions. Using different techniques, recently made available to the theory of self-adjoint extensions of symmetric operators, in this paper we give an unified presentation in any dimension $d=1,2,3$ of the model-Hamiltonians introduced by Exner  and we add few results on the decay properties of metastable states.  

In a recent paper \cite{CCF07b} we characterized a family of self-adjoint Hamiltonians generating the dynamics of a quantum particle interacting, via zero range forces, with an array of localized quantum systems with a finite number of energy levels.
Preliminary results suggest that in those systems it is possible to investigate details of the reduced dynamics of a particle moving through a quantum environment. In particular in \cite{CCF07a} we used such kind of Hamiltonians to examine the evolution of a quantum particle inside a tracking chamber.

On the other hand the same family of Hamiltonians may be used to describe quantum systems where the role of the environment is played by the particles. Following this idea we define and analyze a system made up of a localized quantum bit (a
model-atom or a spin) in interaction with a non relativistic quantum particle. We first define an unperturbed system showing a
ground state level and an upper energy level embedded in the continuous part of the spectrum. We then show how a small
perturbation makes the ground state move slightly in energy and the upper level turn into a resonance. One can verify that any characterization of resonances apply to our model. The analysis of resonances and their perturbations for Schr\"odinger operators with short range potentials and for point interaction Hamiltonians was given in \cite{AH-K84b}, where one can find also a list of references about the various aspects of resonance theory exhaustive up to the time of its publication. More recently a renewed interest on the subject, both for Schr\"odinger and wave equation, showed up (see, e.g., \cite{Bal89},  \cite{Hun90},  \cite{MS99}, \cite{SW98}). Our work deals with resonances generated by perturbation of eigenvalues embedded in the continuum. For recent results on this specific aspect of resonance theory see \cite{CS01}. For theory and applications about resonances induced by curvature in quantum waveguides see \cite{DEM01}, \cite{DES95}, \cite{KS07} and references therein.

For the sake of simplicity we consider here only the case of a two level atom and we will investigate only metastables states obtained as a perturbation of embedded eigenvalues not at the continuous threshold. Results on the latter case will be given elsewhere. The extension of our results to a generic discrete array of any multilevel localized system is straightforward. With a suitable choice of  positions, energy levels and  interaction parameters it is possible to
define Hamiltonians with a very rich spectral structure. In particular the final Hamiltonian can exhibit any number of
metastable and isolated bound states as well as any number of thresholds. On the other hand any matrix-valued Schr\"odinger operator with smooth potential could be approximated by point interaction Hamiltonians of the type we define in this paper with minimal changes with respect to what was done in the scalar case \cite{FHT88}.

The generalization to the case of a large number of non relativistic bosons interacting with the localized q-bit is in
progress, but no modifications in the main qualitative features of the results presented here are expected. It may have some pedagogical interest the analogy which exists  between this latter system with the one consisting of a non relativistic atom coupled to the
radiation field, analyzed in a series of important papers in the last decade (see \cite{A-SFFS07pp},  \cite{BFP06}, \cite{HHH07pp} and references therein). The correspondence with this much more complex quantum system can be established interpreting the ionization of a particle in the model we analyze in this paper as the
creation of a photon and the localized quantum system as the atom. ``Spontaneous emission'' would correspond to the transition from
a metastable state to a state where the atom gets a lower energy and an ionized particle is produced. A local attractive
interaction between the atom and the particles produces a ``vacuum state'' well separated from the continuous part of the spectrum
preventing any ``infrared problem'' in  the model-system considered here. The case of a ground state
at the continuum threshold will be treated elsewhere. 

The analogy suggests that, in spite of their simplicity, the Hamiltonians we
analyze show interesting spectral characteristics considered to be typical of more elaborate and realistic systems.

\section{Basic Notation and Preliminary Results}

This section will be devoted to introduce notation and to recall few basic results about Hamiltonians we will make use of in the rest of the paper.

The system under analysis is made up of one (non-relativistic) quantum particle and one two level localized subsystem  (q-bit or
spin).

The two level system will be described as a spin 1/2  placed in a fixed position of space, i.e. a unitary vector in
$\CO^2$. Without loss of generality we assume that it is placed in the origin. We denote by  $\hat\sigma^{(1)}$ the first Pauli
matrix and with  $\chi_{\sigma}\in\CO^2$ the normalized eigenvectors of the operator $\hat\sigma^{(1)}$ \begin{equation*}
\hat\sigma^{(1)}\chi_{\sigma}=\sigma\chi_{\sigma}\qquad \sigma=\pm1\,;\;\|\chi_{\sigma}\|_{\CO^2}=1\,. \end{equation*} A spin state can be
written as a linear superposition $a\,\chi_++b\,\chi_-$, with $a,b\in\CO$ and $|a|^2+|b|^2=1$.

The Hilbert state-space of a system made up of one particle in dimension $d$ and one spin 1/2 is \begin{equation*}
\HH:=L^2(\RE^d)\otimes\CO^2\qquad d=1,2,3\,. \end{equation*}

A capital Greek letter will denote a vector in $\HH$. Any $\Psi\in\HH$ admits  a decomposition of  the following form
\begin{equation*}
%\label{decomposition}
\Psi=\sum_{\sigma}\psi_\sigma\otimes\chi_\sigma\,, 
\end{equation*}
where the sum runs over $\si=\pm$. In this
representation the scalar product in $\HH$ reads
\begin{equation*}
\langle\Psi,\Phi\rangle:=\sum_{\sigma}(\psi_{\sigma},\phi_{\sigma})_{L^2}
\qquad\Psi,\Phi\in\HH\,.
\end{equation*}

Let $S$ be the linear operator in  $\HH$ whose domain and action are defined as follows
\begin{equation*}
D(S):=C_0^\infty(\RE^d\backslash\underline0)\otimes\CO^2\qquad d=1,2,3
\end{equation*}
\begin{equation*}
S:=-\Delta\otimes\mathbb{I}_{\CO^2}+\mathbb{I}_{L^2}\otimes\beta\hat\sigma^{(1)}\qquad\beta\in\RE^+\,,
\end{equation*}
where
$C_0^\infty(\RE^d\backslash\underline0)$ denotes the space of infinitely differentiable functions in $\RE^d$ with compact
support not containing the origin. $S$ is a densely defined symmetric operator in $\HH$. A trivial self-adjoint extension of $S$
is the operator
\begin{equation*}
%\label{DH}
D(H):=H^2(\RE^d)\otimes\CO^2\qquad d=1,2,3
\end{equation*}
\begin{equation*}
%\label{H}
H:=-\Delta\otimes\mathbb{I}_{\CO^2}+\mathbb{I}_{L^2}\otimes\beta\hat\sigma^{(1)}\qquad\beta\in\RE^+\,,
\end{equation*}
where $H^2(\RE^d)$
denotes  the standard Sobolev space of functions in $L^2(\RE^d)$ together with their first and second generalized derivatives.
$H$ describes the free evolution of a quantum particle and a localized spin whose reduced dynamics is generated by the
Hamiltonian $\bt\hat\si^{(1)}$. Notice that no interaction between particle and spin is considered at this stage. We fixed
$\hbar=2m=1$, where $\hbar$ is the Plank's constant and $m$ denotes the particle mass. The constant $\bt$ has the dimension of an
energy which, in our units, corresponds to a reciprocal square length and it measures half the energy gap between the spin
states $\chi_+$ and $\chi_-$. The action of $H$ on vectors in its domain can be written as \begin{equation*}
H\Psi=\sum_{\sigma}\big(-\Delta+\beta\,\sigma\big) \psi_\sigma\otimes\chi_\sigma \qquad\Psi\in D(H)\,. \end{equation*} Its resolvent
$R(z):=(H-z)^{-1}$  acts  on $\HH$ as 
\begin{equation*}
R(z)\Psi=\sum_{\sigma}\big(-\Delta-z+\beta\,\sigma\big)^{-1}
\psi_\sigma\otimes\chi_\sigma\qquad\Psi\in\HH;\; z\in\rho(H)\,, 
\end{equation*} 
\n where $\rho(H)$ denotes  the resolvent set of $H$. The
integral kernel of  $\big(-\Delta-w\big)^{-1}$, $G^w(x-x')$, is given by
\begin{equation}
\label{Gz}
G^w(x):=\left\{\begin{aligned}
&i\frac{e^{i\sqrt{w}|x|}}{2\sqrt{w}}
& d=1\\
&\frac{i}{4}H^{(1)}_0\big(\sqrt{w}\,|x|\big)\qquad& d=2
\\
&\frac{e^{i\sqrt{w}|x|}}{4\pi|x|}& d=3
\end{aligned}\right.
\qquad\textrm{with}\quad w\in\CO\backslash\RE^+\,;\;\Im(\sqrt{w})>0
\end{equation}
Here $H^{(1)}_0\big(\eta\big)$ is the zero-th Bessel
function of third kind (also known as Hankel function of the first kind). We recall that $H^{(1)}_0\big(\eta\big)$ tends to zero
as $|\eta|\to \infty$ for $\Im\eta>0$ and that it has a logarithmic singularity in zero
\begin{equation*}
H^{(1)}_0\big(\eta\big)=
\frac{2i}{\pi}\ln\frac{\eta}{2}+1+\frac{2i\ga}{\pi}+\OO(\ln(\eta)\eta^2)\,,
\end{equation*}
where $\ga$ is the Euler's constant. The spectrum of
$H$ is  only absolutely continuous\footnote{In this paper we will adopt the classification of the spectrum given in
\cite{RSI}.}, in particular
\begin{equation*}
\sigma_{pp}(H)=\varnothing\;;\quad \sigma_{ess}(H)=\sigma_{ac}(H)=[-\beta,\infty)\,.
\end{equation*}

All the self-adjoint extensions of $S$ can be obtained by Krein's theory. Details of the construction in dimension one and three
can be found in \cite{CCF07b}. For the sake of completeness we give in the following a brief account of the procedure.

Let $S^*$ denote the adjoint of $S$, the deficiency indices of $S$ are defined as $n_\pm=\dim[\ker(S^*\pm i)]$. It is easily
seen that: $(n_+,n_-)=(4,4)$ for $d=1$ and $(n_+,n_-)=(2,2)$ for $d=2,3$. Since the whole family of self-adjoint extensions of
$S$ is parameterized by the family of unitary maps between $\ker(S^*+i)$ and $\ker(S^*-i)$, in dimension one  there is a 16 real
parameters family of self-adjoint extensions of $S$,  while in dimensions two and three the number of free parameters is reduced
to 4. For $d \geqslant  4$ the deficiency indices are $(n_+,n_-)=(0,0)$ meaning that the operator $S$ is essentially
self-adjoint and its closure coincides with the free Hamiltonian $H$.

Let us define 
\begin{equation*}
%\label{phiz2d}
\Phi_{\sigma}^z:=G^{z-\beta\,\sigma}\otimes\XX_{\sigma}\qquad z\in\CO\backslash\RE\,,
\end{equation*}
where $G^w(x)$, has been defined in (\ref{Gz}). For $d=1,2,3$,  $\Phi_{+}^z$ and $\Phi_{-}^z$ are independent solutions of the
equation
\begin{equation}
\label{defecteq}
S^*\Phi_\si^z=z\Phi_\si^z\,\qquad \si=\pm\,;\;\Phi_\si\in D(S^*)\,;\;z\in\CO\backslash\RE\,.
\end{equation}

In dimension two and three $\{\Phi_{+}^z,\Phi_{-}^z\}$ is a basis of $\ker(S^*-z)$. In dimension one there exist two more
independent solutions of equation \eqref{defecteq}, namely $\Phi_{1\sigma}^{z}:=(G^{z-\beta\,\sigma})'\otimes\XX_{\sigma}$,
$\si=\pm$, where $(G^{z-\beta\,\sigma})'$ is the derivative of $G^{z-\beta\,\sigma}$. In such a case a
basis of the defect space is $\{\Phi_{+}^z,\Phi_{-}^z,\Phi_{1+}^z,\Phi_{1-}^z\}$.

We will make use  of $2\times2$ matrices; their elements  will be labeled according to the following notation
\begin{equation*}
 M:=
\begin{pmatrix}
 M_{(+,+)}&M_{(+,-)}\\ M_{(-,+)}&M_{(-,-)}
 \end{pmatrix}\,.
 \end{equation*}
  Given two matrices $M_1$ and $M_2$ we will denote
by $(M_1|M_2)$ the $2\times 4$ matrix with the  first and second column given by the columns of $M_1$ and third and forth column
given by the columns of $M_2$.

The following theorem holds true
\begin{theorem}
\label{thse}
Let $A$ and $B$ be $2\times2$ matrices such that $AB^*=BA^*$ and
$(A|B)$ is of maximal rank . Define the operator
\begin{align*}
D(H^{AB}):=\Big\{& \Psi=\sum_\si\psi_\si\otimes\XX_\si\in\HH\Big|\; \Psi =\Psi^z+\sum_{\si}q_{\si}\Phi_{\si}^z\,;\;\Psi^z\in
D(H);\;z\in\rho(H^{AB})\,;\nonumber\\
&\sum_{\si'} A_{\si,\si'}q_{\si'} =\sum_{\si'}B_{\si,\si'}f_{\si'}\,;
\\
&\begin{aligned}
&f_\si=\psi_\si(0)&&d=1\,;\\
&f_\si=\lim_{|x|\rightarrow 0}\Big[\psi_\si(x)+\frac{q_\si}{2\pi}\ln(|x|)\Big]\quad&&d=2\,;\\
&f_{\si}=\lim_{|x|\rightarrow 0}\Big[\psi_\si(x)-\frac{q_{\si}}{4\pi\,|x|}\Big]&&d=3\,\Big\}
\end{aligned}
\end{align*}
\begin{equation}
\label{action3d}
H^{AB}\Psi:=H\Psi^z+z\sum_{\si}q_{\si}\Phi_{\si}^z \qquad\Psi\in D(H^{AB})\,.
\end{equation}
$H^{AB}$ is
self-adjoint and its resolvent, $R^{AB}(z)=(H^{AB}-z)^{-1}$, is given by
\begin{equation}
\label{resolvent}
R^{AB}(z)=R(z)+\sum_{\si,\si',\si''}
\big((\Gamma^{AB}(z))^{-1}\big)_{\si,\si'}B_{\si',\si''} 
\langle\Phi^{\bar z}_{\si''} ,\cdot\,\rangle\Phi^z_{\si}\quad
z\in\rho(H^{AB})\,. 
\end{equation} 
\n Where $\Gamma^{AB}(z)$ is the $2\times 2$ matrix \begin{equation*} \Gamma^{AB}(z)=B\Gamma(z)+A\,, \end{equation*} \n with
\begin{align*}
&\Gamma(z)=
\begin{pmatrix}
\displaystyle-\frac{i}{2\sqrt{z-\bt}}&0\\
0&\displaystyle-\frac{i}{2\sqrt{z+\bt}}
\end{pmatrix}\qquad&& d=1\\
&\Gamma(z)=
\begin{pmatrix}
\displaystyle\frac{\ln(\sqrt{z-\bt}/2)+\ga-i\pi/2}{2\pi}&0\\
0&\displaystyle\frac{\ln(\sqrt{z+\bt}/2)+\ga-i\pi/2}{2\pi}
\end{pmatrix}&& d=2\\
&\Gamma(z)=
\begin{pmatrix}
\displaystyle\frac{\sqrt{z-\bt}}{4\pi i}&0\\
0&\displaystyle\frac{\sqrt{z+\bt}}{4\pi i}
\end{pmatrix}&& d=3\,.
\end{align*}
\end{theorem}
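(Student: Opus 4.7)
The plan is to follow the standard Krein-resolvent strategy: treat the charges $q_\sigma$ and the regularised boundary values $f_\sigma$ as a boundary triple for $S^*$, verify the explicit resolvent formula by direct substitution, and then combine symmetry with the surjectivity of $H^{AB}-z$ for non-real $z$ to get self-adjointness.

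The first step is to check that the domain and the action are well defined. The decomposition $\Psi = \Psi^z + \sum_\sigma q_\sigma \Phi_\sigma^z$ with $\Psi^z \in D(H)$ is unique because no $\Phi_\sigma^z$ belongs to $H^2(\RE^d)\otimes \CO^2$: for $d=1$ the $\Phi_\sigma^z$ have a jump in the first derivative at the origin, while for $d=2,3$ they carry a logarithmic or Coulomb singularity there. Since the leading singular part of $G^w(x)$ near $x=0$ is independent of $w$, the difference $\Phi_\sigma^z - \Phi_\sigma^{z'}$ lies in $D(H)$, so the charges $q_\sigma$ do not depend on the auxiliary $z$; combining this with $S^*\Phi_\sigma^z = z\Phi_\sigma^z$ makes \eqref{action3d} $z$-independent as well. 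The boundary values $f_\sigma$ are finite because the explicit subtraction in their definition cancels exactly the singular part of $q_\sigma \Phi_\sigma^z$, while $\Psi^z \in H^2$ contributes a regular (or regularly traceable) value at the origin.

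Next I would verify the resolvent identity \eqref{resolvent} by direct computation. Setting $\Psi := R^{AB}(z)\Phi$, one has $\Psi - R(z)\Phi \in \operatorname{span}\{\Phi_\sigma^z\}$, so $\Psi$ admits the prescribed decomposition with $\Psi^z = R(z)\Phi$ and
\begin{equation*}
q_\sigma = \sum_{\sigma',\sigma''} \bigl((\Gamma^{AB}(z))^{-1}\bigr)_{\sigma,\sigma'} B_{\sigma',\sigma''} \langle \Phi_{\sigma''}^{\bar z},\Phi\rangle .
\end{equation*}
A short asymptotic expansion at the origin, using \eqref{Gz} and the explicit entries of $\Gamma(z)$, shows that the regularised boundary values of $\Psi$ have the form $f = \Gamma(z)\,q + (\text{trace of }R(z)\Phi)$; inserting this into $\Gamma^{AB}(z) = B\Gamma(z)+A$ yields precisely $Aq = Bf$, so $\Psi \in D(H^{AB})$. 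Applying $(H^{AB}-z)$ and using $(H-z)R(z)\Phi = \Phi$ together with $S^*\Phi_\sigma^z = z\Phi_\sigma^z$ returns $\Phi$. This establishes the formula and, \emph{en passant}, the invertibility of $\Gamma^{AB}(z)$ in the claimed resolvent set.

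Self-adjointness is obtained by first proving symmetry via a Green-type identity, excising a ball of radius $r$ around the origin and letting $r\to 0$; the singular profiles of the $\Phi_\sigma^z$ collapse the boundary contribution to
\begin{equation*}
\langle H^{AB}\Psi,\Phi\rangle - \langle \Psi, H^{AB}\Phi\rangle = \overline{q^\Psi}\cdot f^\Phi - \overline{f^\Psi}\cdot q^\Phi ,
\end{equation*}
which vanishes under $Aq=Bf$, $Ap=Bg$ and $AB^*=BA^*$ (the standard Lagrangian condition on this symplectic form). Maximal rank of $(A|B)$ ensures the boundary-data subspace has exactly half the dimension of the symplectic space, so this symmetric extension is maximal; together with the surjectivity of $H^{AB}-z$ for non-real $z$ already produced by the resolvent formula, this yields self-adjointness. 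The main technical obstacle I anticipate is the $d=2$ case: the logarithmic singularity of the Hankel function makes both the definition of $f_\sigma$ and the small-$|x|$ derivation of $\Gamma(z)$ more delicate than in $d=1,3$, and the correct branch of $\ln(\sqrt{z\pm\beta}/2)$ has to be tracked carefully so that $\Im\Gamma(z)$ has the right sign and $\Gamma^{AB}(z)$ is genuinely invertible off the spectrum.
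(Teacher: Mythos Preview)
Your argument is correct and rests on the same underlying structure as the paper's proof: both identify $(q_\sigma,f_\sigma)$ as boundary data for $S^*$ and use the Green-type identity
\[
\langle S^*\Psi,\Phi\rangle-\langle\Psi,S^*\Phi\rangle=\overline{q^\Psi}\cdot f^\Phi-\overline{f^\Psi}\cdot q^\Phi
\]
to characterise the self-adjoint extensions. The difference is one of packaging. The paper recognises $(\CO^2,\Lambda,\tilde\Lambda)$ as a boundary value space in the sense of Gorbachuk--Gorbachuk and then simply cites Theorem~3.1.6 of \cite{GG91} for self-adjointness and the Krein-type formula of Albeverio--Pankrashkin \cite{AP05} for the resolvent; the action \eqref{action3d} is then read off from $H^{AB}=(R^{AB}(z))^{-1}+z$. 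You instead verify the resolvent formula by direct substitution and deduce self-adjointness from symmetry plus the surjectivity of $H^{AB}-z$ that the resolvent formula already provides. Your route is more self-contained and makes the role of the maximal-rank condition on $(A|B)$ explicit as the Lagrangian condition; the paper's route is shorter and immediately gives the full parametrisation of \emph{all} self-adjoint extensions (for $d=2,3$) rather than just the one at hand. One small point to watch in your computation: the regularised value coming from $q_\sigma\Phi_\sigma^z$ is $-\Gamma(z)_{\sigma\sigma}q_\sigma$, not $+\Gamma(z)_{\sigma\sigma}q_\sigma$, so the correct relation is $f=\mathrm{trace}(R(z)\Phi)-\Gamma(z)q$; this is exactly what makes $\Gamma^{AB}(z)=B\Gamma(z)+A$ appear with the right sign when you impose $Aq=Bf$.
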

\begin{remark}
For $d=2,3$, theorem \ref{thse} gives all possible self-adjoint extensions of $S$. For $d=1$, Hamiltonians $H^{AB}$ include only
extensions of $S$ for which the wave function part of vectors in $D(H^{AB})$ is continuous in the origin. To cover the whole
family of self-adjoint extensions of $S$, and include Hamiltonians defined on vectors with discontinuous wave function, we
should have taken into account the whole bases of the defect spaces of $S$. For the analysis of the general case one can
refer to \cite{CCF07b}.
\end{remark}
\begin{remark}
\label{rem}
It is easily seen that for  $\Im z>0$ the operator $R^{AB}(z)-R(z)$ is a finite rank operator. This implies that $\sigma_{ess}(H^{AB})=\sigma_{ess}(H)=[-\bt,\infty)$ (see, e.g., \cite{BS86}). Moreover 
notice that the pure point spectrum of $H$ is empty. In turn  this means that any $\lambda\in\sigma_{pp}(H^{AB})$ must be a polar  singularity of $R^{AB}(z)-R(z)$.
\end{remark}
\begin{proof} The
proof of theorem  \ref{thse} for  $d=1,3$ was given, in a more general setting, in \cite{CCF07b}. We give here a sketch of the
proof only  for $d=2$. Define two linear applications
$\Lambda:D(S^*)\to\CO^2$ and $\tilde\Lambda:D(S^*)\to\CO^2$, $\Lambda$ defines the ``charges'' $q_\si$, 
\begin{equation*}
(\Lambda\Psi)_\si:=q_\si=-\lim_{|x|\to0}\frac{2\pi}{\ln(|x|)}\psi_\si(x)\qquad \Psi=\sum_\si\psi_\si\otimes\chi_\si\in D(S^*)\,.
\end{equation*} 
$\tilde\Lambda$  defines $f_{\si}$, i.e.,  the value in the origin of the regular part of the wave function,
\begin{equation*}
(\tilde\Lambda\Psi)_{\si}:=f_{\si} \qquad\Psi=\sum_{\si}\psi_\si\otimes\chi_{\si}\in D(S^*)\,. 
\end{equation*}
A direct calculation shows
that  the triple $(\CO^m,\Lambda,\tilde\Lambda)$  is a boundary value space for $S$ (see, e.g., \cite{GG91}), i.e. for all
$\Psi_1,\,\Psi_2\in D(S^*)$ \begin{equation*} \langle\Psi_1,S^*\Psi_2\rangle-\langle S^*\Psi_1,\Psi_2\rangle
=\sum_{\si}\Big[\overline{(\Lambda\Psi_1)}_\si(\tilde\Lambda\Psi_2)_\si-
\overline{(\tilde\Lambda\Psi_1)}_\si(\Lambda\Psi_2)_\si\Big] \end{equation*} moreover  $\Lambda$ and $\tilde\Lambda$ are surjective. Then
all self-adjoint extensions of $S$ are given by the  restrictions of $S^*$ on vectors $\Psi$ satisfying \begin{equation*}
\sum_{\si'}A_{\si,\si'}(\Lambda\Psi)_{\si'}= \sum_{\si'}B_{\si,\si'}(\tilde\Lambda\Psi)_{\si'}\,, \end{equation*} \n where $A$ and $B$ are
two $2\times 2$ matrices satisfying $AB^*=BA^*$ and $(A|B)$ has maximal rank (see, e.g., Theorem 3.1.6 in \cite{GG91}). This
proves that operators $H^{AB}$ are self-adjoint. The resolvent formula \eqref{resolvent} comes directly from  the resolvent
formula in \cite{AP05} (see also Theorem 10 in \cite{Pan06}).  Formula \eqref{action3d} is a consequence of the fact that
$H^{AB}\Psi=\big(R^{AB}(z)\big)^{-1}\Psi+z\Psi$.
\end{proof}

\section{The Unperturbed Hamiltonian}

Our analysis will proceed as follows. Within the family of point interaction Hamiltonians described in the previous section we
pick up an ``unperturbed'' Hamiltonian $\hh_0$ which exhibits one eigenvalue embedded in the continuous spectrum.

In the next section we then show that inside the same family one can find a sub-family of self-adjoint modifications  of $\hh_0$
indexed by a perturbation parameter $\ve$. We prove that for $\ve>0$ the embedded eigenvalue moves off the real axis and we
estimate the time decay rate of the corresponding metastable state. For a general approach to the problem of resonances
generated by perturbation of eigenstates embedded in the continuous spectrum see \cite{CS01}.

\begin{definition} Let $-\infty<\al<\infty$. The operator $\hh_0$ is  defined as follows 
\begin{equation*}
\begin{aligned}
D(\hh_0):=\Big\{\Psi\in\HH\Big|&\; \Psi
=\Psi^z+\sum_{\sigma}q_{\sigma}\Phi_{\sigma}^z\,;\;\Psi^z=\sum_\sigma\psi_\sigma^z\otimes\chi_\sigma\in
D(H);\;z\in\rho(\hh_0)\,;\\
&q_{\sigma}=-\al f_\si\,,\;d=1\,;\;\al q_\si=f_\si\,,\;d=2,3\Big\}
\end{aligned}
\end{equation*} \begin{equation*} \hh_0\Psi:=H\Psi^z+z\sum_{\sigma}q_{\sigma}\Phi_{\sigma}^z \qquad\Psi\in D(\hh_0)\,. \end{equation*}
\end{definition}
Self-adjointness of $\hh_0$ comes directly from theorem \ref{thse} by taking $A=\II_2$ and $B=-\al\II_2$  for $d=1$,
$A=\al\II_2$ and $B=\II_2$ for $d=2,3$.

Theorem \ref{thse} gives also the explicit formula  for resolvent of $\hh_0$
\begin{equation*}
\hR_0(z)=R(z)+\sum_{\sigma,\sigma'}
\big((\Gamma_0(z))^{-1}\big)_{\sigma,\sigma'} \langle\Phi^{\bar z}_{\sigma'} ,\cdot\,\rangle\Phi^z_{\sigma}\qquad
z\in\CO\backslash\RE\,, \end{equation*} where
\begin{align*}
&\Gamma_0(z)=
\begin{pmatrix}
\displaystyle-\frac{i}{2\sqrt{z-\bt}}-\frac{1}{\al}&0\\
0&\displaystyle-\frac{i}{2\sqrt{z+\bt}}-\frac{1}{\al}
\end{pmatrix}\qquad&& d=1\\
&\Gamma_0(z)=
\begin{pmatrix}
\displaystyle\frac{\ln\big(\sqrt{\mu(z-\bt)}\big)-i\pi/2}{2\pi}&0\\
0&\displaystyle\frac{\ln\big(\sqrt{\mu(z+\bt)}\big)-i\pi/2}{2\pi}
\end{pmatrix}&& d=2\\
&\Gamma_0(z)=
\begin{pmatrix}
\displaystyle\frac{\sqrt{z-\bt}}{4\pi i}+\al&0\\
0&\displaystyle\frac{\sqrt{z+\bt}}{4\pi i}+\al
\end{pmatrix}&& d=3\,,
\end{align*}
where $\mu=e^{2(\gamma+2\pi\al)}/4$.

\n Notice that the matrix elements of $\Gamma_0$ have the dimension of the inverse of the square root of an energy in $d=1$, the dimension of the square root of an energy in $d=3$ whereas they are dimensionless in $d=2$.

The spectrum of $\hh_0$ is characterized as follows
\begin{theorem}
For $d=1,2,3$ the  essential  spectrum  of  $\hh_0$ is \begin{equation*} \sigma_{ess}(\hat H_0)=[-\beta,+\infty)\,.
\end{equation*}

The point spectrum coincides with the set of the (real) roots of equation $\det\Gamma_0(z)=0$.\\
\begin{itemize}
    \item[$d=1$.] If $0 \leqslant\al< \infty$ the point spectrum is empty. If $-\infty<\al<0$ the point spectrum consists of two simple eigenvalues given by
\begin{equation*}
 E_{0,-}=-\bt-\frac{\al^2}{4}\;;\qquad E_{0,+}=\bt-\frac{\al^2}{4}\,.
\end{equation*} For all $-\infty<\al<0$ the lowest eigenvalue, $ E_{0,-}$, is below the threshold of the essential spectrum. For
$\al<-2\sqrt{2\bt}$ also the second eigenvalue, $ E_{0,+}$, is below the threshold of the essential spectrum, in this case
the point spectrum is only discrete and the essential spectrum is only absolutely continuous. If $-2\sqrt{2\bt}\leqslant\al<0$
the upper eigenvalue is embedded in the continuous spectrum, $-\bt\leqslant E_{0,+}<\bt$.
\item[$d=2$.] For all 
$0<\mu<\infty$
the point spectrum consists of two simple eigenvalues given by
\begin{equation*}
 E_{0,-}=-\bt-\frac{1}{\mu}\;;\qquad E_{0,+}=\bt-\frac{1}{\mu}\,.
\end{equation*}
The lowest eigenvalue, $ E_{0,-}$, is always below the threshold of the essential spectrum. For
$0<2\beta\mu<1$
 also the second eigenvalue, $ E_{0,+}$, is below the threshold of the essential
spectrum, in this case the point spectrum is only discrete and the essential spectrum is only absolutely continuous. If $1\leqslant2\beta\mu<\infty$ the upper eigenvalue is embedded in the continuous spectrum, $-\bt\leqslant
E_{0,+}<\bt$.
\item[$d=3$.] If $0 \leqslant\al <\infty$ the point spectrum is empty. If $-\infty<\al<0$ the point spectrum consists of two simple eigenvalues given by
\begin{equation*}
 E_{0,-}=-\bt-(4\pi\al)^2\;;\qquad E_{0,+}=\bt-(4\pi\al)^2\,.
\end{equation*} The lowest eigenvalue, $ E_{0,-}$, is always below the threshold of the essential spectrum. For
$-\infty<\al<-\sqrt{2\bt}/(4\pi)$ also the second eigenvalue, $ E_{0,+}$, is below the threshold of essential spectrum, in this case the point spectrum is only discrete and the essential spectrum is only absolutely continuous. If
$-\sqrt{2\bt}/(4\pi)\leqslant\al<0$ the upper eigenvalue is embedded in the continuous spectrum, $-\bt\leqslant E_{0,+}<\bt$.
\end{itemize}
\end{theorem}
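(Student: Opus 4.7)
The strategy rests on two ingredients. First, the resolvent formula of Theorem~\ref{thse} displays $\hR_0(z)-R(z)$ as a rank-two operator whose entire $z$-dependence enters through the factor $(\Ga_0(z))^{-1}$. Second, $\Ga_0(z)$ is diagonal in the spin basis in every dimension, so the spectral problem decouples channel by channel.

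For the essential spectrum, Remark~\ref{rem} applied to the rank-two difference $\hR_0(z)-R(z)$ yields immediately $\si_{ess}(\hh_0)=\si_{ess}(H)=[-\bt,\infty)$. For the point spectrum, the same remark shows that any $\la\in\si_{pp}(\hh_0)$ must be a polar singularity of $\hR_0(z)-R(z)$; inspecting the resolvent formula, the only possible source is the matrix $(\Ga_0(z))^{-1}$, so eigenvalues are precisely real zeros of $\det\Ga_0(z)$. Conversely, at such a zero $\la$ I build the eigenvector as $q_\si\Phi^\la_\si$ for the channel $\si$ with $\Ga_{0,(\si,\si)}(\la)=0$; square-integrability follows from the branch condition $\Im\sqrt{\la-\bt\si}>0$, which forces $\la<\bt\si$ and hence exponential decay of $G^{\la-\bt\si}$.

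Solving $\Ga_{0,(\si,\si)}(z)=0$ channel by channel produces the announced eigenvalues. In $d=1$, $\sqrt{z-\bt\si}=-i\al/2$ is compatible with the branch only if $\al<0$, and squaring gives $E_{0,\si}=\bt\si-\al^2/4$. In $d=2$, $\ln\sqrt{\mu(z-\bt\si)}=i\pi/2$ has the unique solution $\mu(z-\bt\si)=-1$ for every $\mu>0$, yielding $E_{0,\si}=\bt\si-1/\mu$. In $d=3$, $\sqrt{z-\bt\si}=-4\pi i\al$ again forces $\al<0$ and gives $E_{0,\si}=\bt\si-(4\pi\al)^2$. In every case in which a solution exists, $E_{0,-}<-\bt$, placing the lower level strictly below the essential threshold.

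It remains to classify $E_{0,+}$. Writing $E_{0,+}+\bt=2\bt-\kappa$, where $\kappa$ denotes $\al^2/4$, $1/\mu$, or $(4\pi\al)^2$ respectively, the sign of $2\bt-\kappa$ separates the two sub-cases of the statement: $\kappa>2\bt$ (equivalent to $\al<-2\sqrt{2\bt}$, $2\bt\mu<1$, and $\al<-\sqrt{2\bt}/(4\pi)$ in the three dimensions) gives a second discrete eigenvalue strictly below the threshold, whereas $\kappa\leq 2\bt$ places $E_{0,+}\in[-\bt,\bt)$ and hence embedded in $\si_{ess}(\hh_0)$. In the embedded case only $\Phi_+^{E_{0,+}}$ is square-integrable (since $E_{0,+}-\bt<0$ ensures exponential decay while $E_{0,+}+\bt\geq 0$ prevents it), so the eigenvector lies entirely in the $+$ channel, consistent with $\Ga_{0,(+,+)}$ alone vanishing. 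The main delicacy of the argument is precisely this branch bookkeeping: translating the abstract condition $\Im\sqrt{\cdot}>0$ into the concrete sign conditions on $\al$ (and the unrestricted admissibility of $\mu>0$ in $d=2$), and verifying that each formal diagonal zero of $\Ga_0$ corresponds to a genuine $L^2$ eigenvector in the appropriate channel.
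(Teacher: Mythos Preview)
Your proposal is correct and follows essentially the same route as the paper: both use the finite-rank resolvent difference (Remark~\ref{rem}) for the essential spectrum and the diagonal structure of $\Ga_0(z)$ to reduce the eigenvalue problem to the two decoupled channels. The only cosmetic difference is that the paper packages the channel-by-channel computation into the observation $\hh_0=(h_\al+\bt)\otimes\Pi_+ + (h_\al-\bt)\otimes\Pi_-$ and then invokes the known spectrum of the one-center point interaction $h_\al$ from \cite{AGH-KH05}, whereas you solve $\Ga_{0,(\si,\si)}(z)=0$ and check the branch condition $\Im\sqrt{z-\bt\si}>0$ by hand.
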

\begin{proof}
The proof of theorem 2 is straightforward. As it was  stressed in remark \ref{rem}  the essential spectrum of $\hh_0$ coincides with the essential spectrum of $H$. The pure point spectrum of  $\hh_0$  can only be a subset of the set of the  real singularities of $\hR_0(z)-R(z)$, which coincides with the  real solutions of $\det\Gamma_0(z)=0$.

In fact, being $\Gamma_0(z)$ diagonal, it is immediate to check that the total spectrum of $\hh_0$ is the union of two components obtained shifting, respectively of $\beta$ or $-\beta$, the spectrum of a point interaction Hamiltonian $h_{\alpha}$ of ``strength'' $\alpha$ (see \cite{AGH-KH05}). In fact we have \begin{equation*}
\hh_0=(h_\al+\bt)\otimes\Pi_+ + (h_\al-\bt)\otimes\Pi_-\,,
\end{equation*} 
where $\Pi_\pm$ are the projectors on $\chi_+$ and $\chi_-$ respectively, $\Pi_\pm=(\chi_\pm,\cdot)_{\CO^2}\chi_\pm$. The dynamics generated by $\hh_0$ describes a particle evolving in a point potential in two independent channels characterized by a different energy of the spin.
\end{proof}
Let us denote by $\hat Q_0:L^2(\RE^d)\to \RE$ the quadratic form associated to the operator $\hat H_0$. $\hat Q_0$ is defined as follows: 
\begin{itemize}
\item[$d=1$.] 
\begin{equation*}
D(\hat Q_0):=\Big\{\Psi\in\HH\Big|\; \Psi=\sum_\sigma\psi_\sigma\otimes\chi_\sigma\,;\;\psi_\sigma\in H^1(\RE)\Big\}
\end{equation*}
\begin{equation*}
\hat Q_0[\Psi]:=\sum_{\sigma}\bigg[\int_\RE|\psi_\sigma'|^2+\sigma\beta|\psi_\sigma|^2dx+\alpha|\psi_\sigma(0)|^2\bigg]\,.
\end{equation*}
\item[$d=2,3$.] 
\begin{equation*}
\begin{aligned}
D(\hat Q_0):=\Big\{\Psi\in\HH\Big|&\; \Psi=\sum_\sigma\psi_\sigma\otimes\chi_\sigma\,;\;
\psi_\sigma=\psi_\sigma^\lambda+q_\sigma G^{\lambda-\sigma\beta}\,;\\
&\psi_\sigma^\la\in H^1(\RE^d)\,,\;q_\si\in\CO\,,\;\la<-\beta\Big\}
\end{aligned}
\end{equation*}
\begin{equation*}
\hat Q_0[\Psi]:=\sum_{\sigma}\bigg[\int_{\RE^d}|\nabla\psi^\la_\sigma|^2-(\la-\si\beta)|\psi^\la_\sigma|^2+\la|\psi_\sigma|^2dx+\big(\Gamma_0(\la)\big)_{\si,\si}|q_\si|^2\bigg]\,.
\end{equation*}
\end{itemize}

\section{The Perturbed Hamiltonian}
The perturbed dynamics is generated by the Hamiltonian $\hh_\ve$ defined as follows
\begin{definition} Let $-\infty<\al<\infty$ and $0<\ve\ll|\al|$. The Hamiltonian  $\hh_\ve$ has the following domain and action
\begin{equation*}
\begin{aligned}
D(\hh_\ve):=\Big\{\Psi\in\HH\Big|&\; \Psi
=\Psi^z+\sum_{\sigma}q_{\sigma}\Phi_{\sigma}^z\,;\;\Psi^z=\sum_\sigma\psi_\sigma^z\otimes\chi_\sigma\in
D(H);\;z\in\rho(\hh_\ve)\,;\\
&\begin{aligned}
&q_{\pm}=-\al f_\pm-\ve f_\mp&&d=1\,;\\
&\al q_\pm+\ve q_\mp=f_\pm\qquad&&d=2,3\;\Big\}
\end{aligned}
\end{aligned}
\end{equation*} \begin{equation*} \hh_\ve\Psi:=H\Psi^z+z\sum_{\sigma}q_{\sigma}\Phi_{\sigma}^z \qquad\Psi\in D(\hh_\ve)\,. \end{equation*}
\end{definition}
Again self-adjointness is a direct consequence of theorem \ref{thse}. Notice that $\hh_\ve$ has been obtained by adding $-\ve$
(resp. $\ve$) to the off diagonal terms of  matrix $B$ (resp. $A$) used in the definition of $\hh_0$ for $d=1$ (resp. $d=2,3$).

The resolvent of $\hh_\ve$ is \begin{equation*} \hR_\ve(z)=R(z)+\sum_{\sigma,\sigma'} \big((\Gamma_\ve(z))^{-1}\big)_{\sigma,\sigma'}
\langle\Phi^{\bar z}_{\sigma'} ,\cdot\,\rangle\Phi^z_{\sigma}\qquad z\in\CO\backslash\RE\,, \end{equation*} where
\begin{align*}
&\Gamma_\ve(z)=
\begin{pmatrix}
\displaystyle-\frac{i}{2\sqrt{z-\bt}}-\frac{\al}{\al^2-\ve^2}&\displaystyle\frac{\ve}{\al^2-\ve^2}\\ \\
\displaystyle\frac{\ve}{\al^2-\ve^2}&\displaystyle-\frac{i}{2\sqrt{z+\bt}}-\frac{\al}{\al^2-\ve^2}
\end{pmatrix}\qquad &d=1\\
&\Gamma_\ve(z)=
\begin{pmatrix}
\displaystyle\frac{\ln\big(\sqrt{\mu(z-\bt)}\big)-i\pi/2}{2\pi}&\ve\\ \\
\ve& \displaystyle\frac{\ln\big(\sqrt{\mu(z+\bt)}\big)-i\pi/2}{2\pi}
\end{pmatrix}
&d=2\\
&\Gamma_\ve(z)=\begin{pmatrix}
                \displaystyle-\frac{i\sqrt{z-\bt}}{4\pi}+\al&\ve\\ \\
                \ve&\displaystyle-\frac{i\sqrt{z+\bt}}{4\pi}+\al
                \end{pmatrix}
&d=3
\end{align*}

It is easy to verify that for all $z\in\CO\backslash\RE$, there exists $\ve_0$ such that for all $0<\ve<\ve_0$
\begin{equation}
\label{resext}
\|\hR_\ve(z)-\hR_0(z)\|_{\BB(\HH,\HH)}\leqslant \ve C
\end{equation}
where $C$ is a positive constant\footnote{In the following we shall always denote by $C$ a generic positive constant. The value of $C$ can change from line to line.}. Formula \eqref{resext} implies the uniform convergence of the resolvent of $\hh_\ve$ to the resolvent of $\hh_0$ for $\ve\to0$. Consequently (see, e.g., \cite{RSI}) the unitary group $e^{-i\hh_\ve t}$ converges strongly to $e^{-i\hh_0 t}$ uniformly for $t$ in any finite interval. In this sense Hamiltonian $\hh_\ve$ is a small perturbation of $\hh_0$.

We also point out the operators $\hat H_0$ and $\hat H_\ve$ have the same form domain. In particular, let us denote by $\hat Q_\ve$ the quadratic form  associated to the operator $\hat H_\ve$, then $D(\hat Q_\ve)=D(\hat Q_0)$ and 
\begin{itemize}
\item[$d=1$.] 
\begin{equation*}
\hat Q_\ve[\Psi]=\hat Q_0[\Psi]+\ve\big[ \bar\psi_+(0)\psi_-(0)+ \bar\psi_-(0) \psi_+(0)\big]\,,
\end{equation*}
\item[$d=2,3$.] 
\begin{equation*}
\hat Q_\ve[\Psi]=\hat Q_0[\Psi]+\ve\big[ \bar q_+q_-+ \bar q_-q_+\big]\,.
\end{equation*}
\end{itemize}
It is quite easily seen that the difference  $\hat Q_\ve-\hat Q_0$ is form-bounded with respect to $\hat Q_0$.

In the following theorem we prove that for $\ve$ small enough Hamiltonian $\hh_\ve$ does not have
embedded eigenvalues even when $\hh_0$ does. In particular we show that  the real pole of the resolvent of $\hh_0$ between $-\bt$ and $\bt$ moves in a pole of $\hR_\ve$ in the unphysical region of the complex energy. In fact, due to the presence of the square roots $\sqrt{z-\bt}$ and $\sqrt{z+\bt}$, the resolvents $\hR_0(z)$ and $\hR_\ve(z)$ can be extended to operator analytic functions on the complex plane with two branch-cuts starting respectively from $z=\bt$ and $z=-\bt$. The Riemann surfaces then exhibit four sheets corresponding to the four couples of branches chosen for the square roots. We will prove that the pole belongs to the sheet defined by  $\Im(\sqrt{z-\bt})>0$ and $\Im(\sqrt{z+\bt})<0$, that is, the second Riemann sheet of $\sqrt{z+\bt}$. The proof of the theorem is based on a direct analysis of the singularities of the resolvent.
\begin{theorem}
\label{teorema3}
Assume that 
\begin{equation}
\begin{aligned}
&-2\sqrt{2\bt}<\al<0\qquad& d=1\\
&1<2\beta\mu<\infty\qquad &d=2\\
&-\sqrt{2\bt}/(4\pi)<\al<0\qquad& d=3
\label{alcond3d}
\end{aligned}
\end{equation}
 then  there exists $\ve_0>0$ such that for all $0<\ve<\ve_0$ the  essential  spectrum of $\hat H_\ve$ is only absolutely continuous,
\begin{equation*}
\sigma_{ess}(\hat H_\ve)=\sigma_{ac}(\hat H_\ve)=[-\beta,+\infty)\,.
\end{equation*}
Moreover for all $0<\ve<\ve_0$ the point spectrum consists of a single eigenvalue, $E_{\ve,-}<-\bt$, and 
the analytic continuation of the resolvent $\hR_\ve(z)$ through the real axis from the semi-plane $\Im z >0$  has  a simple pole (resonance) in $z= E_{\ve}^{res}$ where $\Im (E_{\ve}^{res})<0$.
For some positive constant $C$ one has $|E_{\ve}^{res}-E_{0,+}|\leqslant C\ve^2$
\end{theorem}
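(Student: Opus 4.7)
The plan is to exploit that, by the resolvent formula in Theorem \ref{thse}, every spectral singularity of $\hh_\ve$ beyond those already present for $H$ must come from a zero of $\det\Gamma_\ve(z)$. In dimension $d=3$ (the other two cases being analogous) this reduces the problem to analyzing the single scalar analytic function
\begin{equation*}
\det\Gamma_\ve(z)=\Bigl(\frac{\sqrt{z-\bt}}{4\pi i}+\al\Bigr)\Bigl(\frac{\sqrt{z+\bt}}{4\pi i}+\al\Bigr)-\ve^2,
\end{equation*}
together with its analytic continuations across the branch cuts at $\pm\bt$.

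For the essential spectrum I would argue as in Remark~\ref{rem}: since $\hR_\ve(z)-R(z)$ is of rank two, Weyl's theorem yields $\sigma_{ess}(\hh_\ve)=[-\bt,\infty)$. To rule out embedded eigenvalues and any singular continuous part, I would compute the boundary value $\det\Gamma_\ve(x+i0)$ for $x\geqslant-\bt$. A short calculation shows that $\Im\det\Gamma_\ve(x+i0)$ is a (sign-checkable) multiple of $\al+\sqrt{\bt-x}/(4\pi)$ when $-\bt<x<\bt$ and is proportional to $-\al(\sqrt{x-\bt}+\sqrt{x+\bt})$ when $x\geqslant\bt$, so it vanishes only at $x=E_{0,+}$; at that point the real part equals $-\ve^2\neq 0$. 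Hence $\det\Gamma_\ve(x+i0)\neq 0$ on the entire cut, which yields continuous boundary values for $\hR_\ve(z)$ there and therefore absolute continuity of the spectrum on $[-\bt,\infty)$.

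For the eigenvalue below $-\bt$, I would restrict to real $z<-\bt$, where both square roots become positive imaginary and $\Gamma_\ve(z)$ becomes a real symmetric matrix. The function $F_\ve(z):=\det\Gamma_\ve(z)=F_0(z)-\ve^2$ satisfies $F_0(z)\to+\infty$ as $z\to-\infty$, $F_0(E_{0,-})=0$, and $F_0$ is strictly decreasing on $(-\infty,E_{0,-}]$; one also checks that $F_0<0$ on $(E_{0,-},-\bt)$. The intermediate value theorem together with monotonicity then produces, for small $\ve$, exactly one zero $E_{\ve,-}\in(-\infty,E_{0,-})$ and none in $(E_{0,-},-\bt)$.

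The main step, and the most delicate, is the resonance. I would analytically continue $\det\Gamma_\ve(z)$ through the cut $[-\bt,\bt)$ onto the second sheet of $\sqrt{z+\bt}$ (characterized by $\Im\sqrt{z+\bt}<0$); concretely this amounts to flipping the sign of $\sqrt{z+\bt}$ in the formula above. At $\ve=0$ the point $z=E_{0,+}$ is a simple zero of this continued determinant because the first diagonal factor of $\Gamma_0$, which is what vanishes at $E_{0,+}$, is untouched by the sheet change and has nonzero $z$-derivative there. The analytic implicit function theorem then yields an analytic branch $\ve\mapsto E_\ve^{res}$ of simple zeros with $E_0^{res}=E_{0,+}$; since the continued determinant depends on $\ve$ only through $-\ve^2$, the linear term in $\ve$ vanishes and one obtains $|E_\ve^{res}-E_{0,+}|\leqslant C\ve^2$. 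The leading coefficient can be computed from $\partial_z(\text{first factor})|_{E_{0,+}}=-1/(32\pi^2|\al|)$ together with the second-sheet value $\al-i\sqrt{2\bt-(4\pi\al)^2}/(4\pi)$ of the second factor at $E_{0,+}$, and it has strictly negative imaginary part, giving $\Im E_\ve^{res}<0$. The hard part throughout is keeping track of the correct sheet of $\sqrt{z+\bt}$: the prescribed location of the pole in the lower half-plane of sheet II corresponds precisely to the choice $\Im\sqrt{z+\bt}<0$, and it is exactly this choice that produces the correct sign of $\Im E_\ve^{res}$; the opposite sheet would give a pole in the upper half-plane with no physical meaning.
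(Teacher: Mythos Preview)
Your proposal is correct in its broad lines and reaches the same conclusions as the paper, but it differs in two of the key technical steps.

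\textbf{Absence of embedded eigenvalues.} You check directly that $\det\Gamma_\ve(x+i0)\neq 0$ on $[-\bt,\infty)$ by computing real and imaginary parts; this is the same endpoint the paper reaches. However, the paper does not infer absence of eigenvalues immediately from this. Instead it shows, via the explicit projection formula $P_\la=\slim_{\de\to0}(-i\de)\hR_\ve(\la+i\de)$ and careful estimates of the type $\de^{1/2}\|\Phi^{\la+i\de}_\si\|\leqslant C$ and $\de^{1/2}|\langle\Phi^{\la-i\de}_\si,\Psi\rangle|\to 0$, that regularity of $(\Gamma_\ve)^{-1}$ at $\la$ indeed forces $P_\la=0$. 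Your sentence ``continuous boundary values for $\hR_\ve(z)$'' is imprecise as stated, since the $\Phi^z_\si$ blow up in $L^2$ as $z$ approaches the cut; the resolvent has boundary values only in weighted spaces, and this is exactly what the paper's estimates encode. The gap is easy to close (a limiting absorption argument, or the paper's projection estimate), but it should be acknowledged.

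\textbf{Location of the resonance.} Here your approach is genuinely different: you invoke the analytic implicit function theorem at the simple zero $E_{0,+}$ of the determinant continued to sheet~II, and read off the $\OO(\ve^2)$ shift and its sign from the derivative. The paper instead sets up a contraction/fixed-point recursion $\xi_{k+1}=-\ve^2/\eta_{k+1}$, $\eta_{k+1}=|\al|+i\sqrt{2\bt-[4\pi(\xi_k+|\al|)]^2}/(4\pi)$ and iterates to convergence. Your route is shorter and more standard; the paper's buys explicit formulae for $E^{res}_\ve$ to leading order (and similarly for $E_{\ve,-}$) in all three dimensions, which are then used in the subsequent time-decay analysis. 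Both arguments identify the same sheet ($\Im\sqrt{z+\bt}<0$) and the same sign of $\Im E^{res}_\ve$.

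For the bound state below $-\bt$ and for the essential spectrum your argument coincides with the paper's.
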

\begin{proof}
From the compactness of $\hat R_\ve(z)-R(z)$ for any $z\in\CO\backslash\RE$  it follows that the essential spectrum of $\hh_\ve$ coincides with $[-\bt,\infty)$. According to theorem 6.10 in \cite{HS96} if  $\la$ is an embedded eigenvalue for $\hh_\ve$ the projection onto the subspace corresponding to the eigenvalue $\la$ is given by 
\[
 P_\lambda=
 \slim_{\delta\to0}(-i\delta)\big(\hh_\ve-\la-i\delta\big)^{-1}\,.
 \]
We shall prove that for all $\ve>0$ and  for all $\la\in[-\bt,\infty)$ the projection $P_\la$ equals zero which, in turn, implies that $\hh_\ve$ has no embedded eigenvalues for all $\ve >0$.
We start proving that $P_\la$ is zero for any $\la \in [-\bt,\infty)$ whenever the matrix $\Gamma_\ve(\la)$ has a bounded inverse. The proof easily follows from the fact that there are no eigenvalues embedded in the continuos spectrum of Schr\"odinger operators with point interactions in any dimension. For the sake of completeness we give here a sketch of the proof (with some detail for $d=3$).

Let 
$\Psi$ belong to $\HH$, $\de>0$ and $\la\in[-\bt,\infty)$. The following inequality holds
\be
\label{mrwalker}
\|\big(\hh_\ve-\la-i\delta\big)^{-1}\Psi\|
\leqslant
\|R(\la+i\delta)\Psi\|+ \sum_{\si,\si'}\big|\big((\Gamma_\ve(\la+i\delta))^{-1}\big)_{\sigma,\sigma'}\big|
|\langle\Phi^{\la-i\delta}_{\sigma'},\Psi\rangle|
\|\Phi^{\la+i\delta}_{\sigma} \|\,.
\ee

The Hamiltonian $H$ has no embedded eigenvalues, which implies that
\[
\lim_{\delta\to0}\de\|R(\la+i\delta)\Psi\|=0\,.
\]

To study the second term on the r.h.s. of equation \eqref{mrwalker} we consider separately the cases $\la\neq\pm\bt$ and $\la=\pm\bt$. Let us assume first $\la\neq\pm\bt$. For all $\zeta\in(0,\infty)$  and for $\de$ small enough the following inequality holds true
\[
\|G^{\zeta+i\de}\|_{L^2(\RE^d)}\
\leqslant \frac{C}{\sqrt{\de}}\qquad{d=1,2,3}\,,
\]
where $C$ is a positive constant depending on $d$  and $\la$ and the functions $G^w$ were defined in \eqref{Gz}. The inequality can be proved by noticing that the Fourier transform of $G^w$ equals $(|k|^2-w)^{-1}$ in any dimension. For all $\la\neq\pm\bt$ and for all $\si$ we have
\be
\label{uno}
0<\lim_{\delta\to0}\de^{1/2}
\|\Phi^{\la+i\delta}_{\sigma} \|<\infty \qquad d=1,2,3\,.
\ee

Moreover it is easy to show that for any $f\in L^2(\RE^d)$, $d=1,2,3$, and $\zeta\in(0,\infty)$
\be
\label{freddo}
\lim_{\de\to0}\de^{1/2}|(G^{\zeta-i\de},f)_{L^2(\RE^d)}|=0
\ee
which implies that for all $\Psi\in\HH$, $\sigma$,  and $\la\neq\pm\bt$
\be
\label{due}
\lim_{\de\to0}\de^{1/2}|\langle\Phi^{\la-i\delta}_{\sigma},\Psi\rangle|=0\,.
\ee
In fact take $d=3$,  $R>0$ and let $B(R/\de^b)$ be the open ball  of radius $R/\de^b$ centered in the origin, with $0<b<1$. We have  
\[
\begin{aligned}
\delta^{\frac{1}{2}}
|(G^{\zeta-i\de},f)_{L^2(\RE^3)}|\leqslant
&\delta^{\frac{1}{2}}\int_{B(R/\de^b)}\frac{|f(x)|}{4\pi|x|}dx+\delta^{\frac{1}{2}}\int_{\RE^3\backslash
B(R/\de^b)}
\frac{e^{-\Im(\sqrt{\zeta-i\delta})|x|}}{4\pi|x|}|f(x)|dx\\
\leqslant& \Big(\frac{R}{4\pi}\Big)^{\frac{1}{2}}\delta^{\frac{(1-b)}{2}}\|f\|_{L^2(\RE^3)}+C\bigg(\int_{\RE^3\backslash
B(R/\de^b)}|f(x)|^{2}dx\bigg)^{\frac{1}{2}}\,,
\end{aligned}
\]
from which formula \eqref{freddo} follows.

An analogous result holds true in $d=1,2$.

From equations \eqref{uno} and \eqref{due} it follows that for all $\la\neq\pm\bt$
\be
\label{tre}
\lim_{\de\to0}\de\big|\big((\Gamma_\ve(\la+i\delta))^{-1}\big)_{\sigma,\sigma'}\big|
|\langle\Phi^{\la-i\delta}_{\sigma},\Psi\rangle|
\|\Phi^{\la+i\delta}_{\sigma'} \|=0
\ee
whenever the matrix $(\Gamma_\ve(z))^{-1}$ has no singularities in $z=\la$. 

Let us analyze  the cases $\la=\pm\bt$. The following inequality holds
\[
\|G^{i\de}\|_{L^2(\RE^d)}\
\leqslant C\frac{1}{\de^{1-d/4}}\qquad{d=1,2,3}
\]
 where $C$ is a positive constant which depends on $d$.  We then have
 \be
\label{quattro}
 \|\Phi^{\mp\bt+i\de}_\pm\|\leqslant C\frac{1}{\de^{1-d/4}}\qquad{d=1,2,3}\,.
\ee
 
 Moreover for all $f\in L^2(\RE^d)$, 
 \[
 |(G^{-i\de},f)|_{L^2(\RE^d)}\
\leqslant C\frac{1}{\de^{1-d/4}}\|f\|_{L^2(\RE^d)}\qquad{d=2,3} 
 \]
 while, proceeding in the same way as it was done in the proof of formula \eqref{freddo} one can prove that
 \[
 \lim_{\de\to0}
\de^{3/4}|(G^{-i\de},f)|_{L^2(\RE)}=0\qquad d=1\,.
 \]
We then have 
 \be
 \label{5}
|\langle\Phi^{\mp\bt-i\de}_\pm,\Psi\rangle|\
\leqslant C\frac{1}{\de^{1-d/4}}\|\Psi\|\qquad{d=2,3} 
 \ee
and
\be
 \label{6}
\lim_{\de\to0}
\de^{3/4}
|\langle\Phi^{\mp\bt-i\de}_\pm,\Psi\rangle|=0\qquad d=1
 \ee
for all $\Psi\in\HH$. 

Since for $d=3$, and $\ve$ small enough,  the matrix $(\Gamma_\ve(z))^{-1}$ is regular in $z=\pm\bt$,  inequalities \eqref{quattro} and \eqref{5} imply that equation \eqref{tre} holds for $\la=\pm\bt$ 
in $d=3$.
 
In $d=1,2$ the statement follows from formulae \eqref{quattro}, \eqref{5} and \eqref{6},  noticing that 
 \[
 \lim_{\de\to0}
|\det\Ga_\ve(\pm\bt+i\de)|^{-1}=0\qquad d=2
 \]
 and 
 \[
0< \lim_{\de\to0}
\frac{|\det\Ga_\ve(\pm\bt+i\de)|^{-1}}{\de^{1/2}}<\infty\qquad d=1\,.
 \]
 We are finally left to prove that $(\Gamma_\ve(z))^{-1}$ is regular for any $\la\in[-\bt, \infty)$ which in turn means that the equation $\det\Ga_\ve(z)=0$ has no solutions in $[-\bt,\infty)$.

\n The equation $\det\Ga_\ve(z)=0$ for $d = 1,2,3$ reads
\begin{equation*}
%\label{eqris1}
\big(i(\al^2-\ve^2)+2\al\sqrt{z-\bt}\big)\big(i(\al^2-\ve^2)+2\al\sqrt{z+\bt}\big)-4\ve^2\sqrt{z-\bt}\sqrt{z+\bt}=0\qquad d=1
\end{equation*}
\begin{equation*}
%\label{eqris2}
\big(\ln\big(\sqrt{\mu(z-\bt)}\big)-i\pi/2\big)\big(\ln\big(\sqrt{\mu(z+\bt)}\big)-i\pi/2\big)-(2\pi\ve)^2=0 \qquad d=2
\end{equation*}
\begin{equation}
\label{eqris3}
\big(-i\sqrt{z-\bt}+4\pi\alpha\big)\big(-i\sqrt{z+\bt}+4\pi\alpha\big)-(4\pi\ve)^2=0 \qquad d=3
\end{equation}
In the following the spectral properties of $\hat{H}_\ve$ 
are analyzed for $\ve$ ``small enough'',  which will mean $\ve/|\alpha|\ll 1$. We give details only for the case $d=3$. 

Let us define the function 
\begin{equation*}
f(z):=
\big(-i\sqrt{z-\bt}+4\pi\alpha\big)\big(-i\sqrt{z+\bt}+4\pi\alpha\big)\,.
\end{equation*}
Equation  $\det\Ga_\ve(z)=0$ is equivalent to $f(z)=(4\pi\ve)^2$. We shall write $z=\la$ whenever $z$ is real. If $\al>0$, $\Im f(\la)<0$ for all $\la\geqslant -\bt$. For $\al<0$ it is easily seen that for $\la\geqslant\bt$, $\Im f(\la)>0$. It remains to consider the case  $\al<0$ and $-\bt<\la<\bt$. Being  
\begin{equation*}
f'(E_{0,+})=\frac{1}{8\pi|\al|}(i\sqrt{2\beta -(4\pi\al)^2}+4\pi|\al|)\,,
\end{equation*}
$f'(E_{0,+})(z-E_{0,+})$ takes real positive values only if $\Im z<0$ and $\Re z>E_{0,+}$. This means that the solution of $f(z)=(4\pi\ve)^2$, if such a solution exists, moves in the second Riemann sheet for $z+\beta$ in the corresponding square root. An identical  argument for $d=1$ proves the same result. For $d=2$ the imaginary part always moves toward negative values  while the real part of the resonance is greater (smaller) than $E_{0,+}$ for $E_{0,+}<0$ ($E_{0,+}>0$).

To prove that in the interval $-\sqrt{2\bt}/(4\pi)\leqslant\al<0$ there is only one eigenvalue below $-\bt$ we only need to analyze equation \eqref{eqris3} for $\la<-\bt$, for such values of $\la$ the equation  is real. We pose $\al=-|\al|$, then the solutions of equation \eqref{eqris3} are given by 
\begin{equation*}
\big(\sqrt{\bt-\la}/(4\pi|\alpha|)-1\big)\big(\sqrt{-\bt-\la}/(4\pi|\alpha|)-1\big)=\frac{\ve^2}{|\al|^2}\qquad\la<-\bt\,.
\end{equation*}
If condition \eqref{alcond3d} is satisfied,  the function on the left hand side of the equation is positive if and only if $\la<-\bt-(4\pi\al)^2$. Moreover in that interval it is a strictly decreasing function of $\lambda$ approaching $+\infty$ for $\lambda$ tending to $-\infty$ . Then there is one and only one eigenvalue, $E_{\ve,-}$ located   below $-\bt$. Up to the first order in $\ve^2$ the explicit formula for $E_{\ve,-}$ reads
\begin{equation*}
E_{\ve,-}=E_{0,-}-\frac{2(4\pi)^2}{\sqrt{2\bt/(4\pi\al)^2+1}-1}\ve^2+\OO(\ve^4)\,.
\end{equation*}

We are left to prove the last statement of the theorem, i.e., that for $-\sqrt{2\bt}/(4\pi)\leqslant\al<0$,  
$\big(\Gamma_\ve(z)\big)^{-1}$ has an analytic continuation onto the second sheet of $\sqrt{z+\bt}$ across the real axis where it has a pole.

Let us rewrite equation \eqref{eqris3} as
\begin{equation*} {{ \left(\frac{\sqrt{\beta-z}}{4\pi}-|\alpha|\right)}}{{
\left(\frac{\sqrt{\beta+z}}{4\pi\,i}-|\alpha|\right)}}=\ve^2 \,.
\end{equation*}
With the following position
\begin{equation*}
 \xi=\left(\frac{\sqrt{\beta-z}}{4\pi}-|\alpha|\right)\quad;\qquad
 \eta=-\left(\frac{\sqrt{\beta+z}}{4\pi\,i}-|\alpha|\right)\,,
\end{equation*}
the equation reads
\begin{equation*}
\xi=-\frac{\ve^2}{\eta}\,.
\end{equation*}
We look for fixed points of the following recurrence procedure
\begin{equation*}
\eta_0=i\frac{\sqrt{2\beta-(4\pi\alpha)^2}}{4\pi}+|\alpha|\;;\qquad
\xi_k=-\frac{\ve^2}{\eta_k}\;;
\qquad
\eta_k=|\alpha|+i\frac{\sqrt{2\beta-[4\pi(\xi_{k-1}+|\alpha|)]^2}}{4\pi}\;;
\end{equation*}
inside the ball $|\eta-\eta_0|<C\ve^2$. In the definition of $\eta_k$ the square root is the analytic continuation of the square root of a positive number: it is defined with positive real part, while the imaginary part can be positive or negative according to the fact that the argument in the square root is  in the first or second Riemann sheet respectively. It follows that $|\eta_k|>|\alpha|$ and $|\xi_k|<\frac{\ve^2}{|\alpha|}$ and being
\begin{equation*}
\xi_{k+1}-\xi_k=\frac{\ve^2}{\eta_k\,\eta_{k+1}}(\eta_{k+1}-\eta_k)
\end{equation*}
we also have 
\begin{equation}
\label{cinque}
|\xi_{k+1}-\xi_k|\leqslant\frac{\ve^2}{\alpha^2}|\eta_{k+1}-\eta_k|\,.
\end{equation}

We use the following estimate that is a direct consequence of the definition of $\eta_k$
\begin{equation}
\label{sei}
|\eta_{k+1}-\eta_k|
=\left|\frac{\sqrt{2\beta-[4\pi(\xi_{k}+|\alpha|)]^2}}{4\pi}-\frac{\sqrt{2\beta-[4\pi(\xi_{k-1}+|\alpha|)]^2}}{4\pi}\right|\leqslant
C|\xi_{k}-\xi_{k-1}|
\,,
\end{equation}
\n fore some positive constant $C$. From \eqref{cinque} and \eqref{sei}
\begin{equation*}
|\xi_{k+1}-\xi_k|\leqslant\frac{\ve^2}{\alpha^2}|\eta_{k+1}-\eta_k|
\leqslant\frac{\ve^2 C}{\alpha^2}|\xi_{k}-\xi_{k-1}|\,.
\end{equation*}
Summing the series we get
\begin{equation*}
|\xi_{\infty}-\xi_0|\leqslant\frac{|\xi_1-\xi_0|}{1-\frac{\ve^2C}{\alpha^2}}
\end{equation*}
with 
\begin{equation*}
\xi_1-\xi_0=\frac{i |\alpha|}{\sqrt{2\beta-(4\pi\alpha)^2}\left(|\alpha|+\frac{i}{4\pi}\sqrt{2\beta-(4\pi\alpha)^2}\right)^3}\ve^4+\OO(\ve^6)\,.
\end{equation*}
Since
\begin{equation*}
\xi_0=-\frac{\ve^2}{\eta_0}=-\frac{(4\pi)^2\ve^2}{2\beta}\left(|\alpha|-\frac{i\sqrt{2\beta-(4\pi\alpha)^2}}{4\pi}\right)
\end{equation*}
the position of the resonance is
\begin{align*}
E_{\ve}^{res}=&\beta-\left[4\pi(\xi_\infty+|\alpha|)\right]^2\\
=&E_{0,+}+\frac{(4\pi)^4\,\al^2\ve^2}{\beta}-i \frac{(4\pi)^4\,\alpha^2\ve^2}{\beta}\sqrt{2\beta/(4\pi\alpha)^2-1}+\OO(\ve^4)\,.
\end{align*}
Then $|E_\ve^{res}-E_{0,+}|<C\ve^2$ and the negative imaginary part of $E_{\ve}^{res}$ means that the pole is in the second Riemann sheet.

The proof of the theorem for the cases $d=1,2$ does not differ substantially from the one given in the three dimensional case. To the lowest order the eigenvalue below $-\beta$ is given by
\begin{equation*}
\begin{aligned}
&E_{\ve,-}=E_{0,-}-\frac{\ve^2}{2\big(\sqrt{8\beta/\al^2+1}-1\big)}+\OO(\ve^4)\qquad&d=1\\
&E_{\ve,-}=E_{0,-}-\frac{8\pi^2\ve^2/\mu}{\ln\big(\sqrt{2\beta\mu+1}\big)}+\OO(\ve^4)\qquad&d=2
\end{aligned}
\end{equation*}
while the resonance is
\begin{align*}
E_{\ve}^{res}=&E_{0,+}+\frac{\ve^2\al^2}{16\beta}-i\frac{\ve^2}{2}\frac{\sqrt{8\beta/\al^2-1}}{8\beta/\al^2}+\OO(\ve^4)\qquad&d=1\\
E_{\ve}^{res}=&E_{0,+}-\frac{8\pi^2\ve^2/\mu}{a^2+(\pi/2)^2}\bigg(a+i\frac{\pi}{2}\bigg)+\OO(\ve^4)\qquad&d=2
\end{align*}
with $a=\ln\big(\sqrt{2\bt\mu-1}\big)$.
\end{proof}
Notice that for all $d=1,2,3$ the shift of the lowest eigenvalue $E_{0,-}$ is real and negative. The shift of the real part of the resonance for $d=1,3$ is always positive, while for $d=2$ it is positive for $1<2\beta\mu<2$ and negative for $2\beta\mu>2$.\\

In the following we will investigate the characteristic time decay rate associated with the presence of sharp resonances. We will refrain
to give any specific definition of a metastable state and we will limit ourselves to consider the decay rate of the survival probability of a spin-up state for an initial state whose support in energy is contained in a small spectral neighborhood around the position of the eigenvalue of the free Hamiltonian embedded in the continuum.  Generic initial states inside and outside the domain of $H_{\ve}$ would have of course different decaying behaviors around $t=0$.

As it is well known (see, e.g., \cite{EXN}, \cite{Exn07}, \cite{FNP01}, \cite{PN94} and \cite{PNBR93}) the small time decay rate has a great relevance to examine the so called Zeno and anti-Zeno behavior of the initial state. In the following we will not give details regarding any specific initial state. We want only to stress that the knowledge of the generalized eigenfunctions makes available a specific formula for any initial state.

Let us recall that for $\ve$ small enough the essential spectrum of $\hat H_\ve$ is only absolutely continuous and coincides with $[-\bt,\infty)$. Then for $d=1,2,3$ and $\la\geqslant-\bt$ the spectral projection of $\hat H_\ve$ on the absolutely continuous part of the spectrum  can be defined via the Stone's formula as
\begin{equation*}
 P_\ve(\la):=
s-\lim_{\delta\to0^+}\frac{1}{2\pi
  i}\Big[ \hat R_\ve(\lambda-i\delta)-\hat R_\ve(\lambda+i\delta) \Big]\qquad\la\in[-\bt,\infty)\,.
\end{equation*}
Given a state $\Psi$ 
one has that
\begin{equation*}
\Psi(t)=\int_{-\bt}^\infty e^{-i\la t}P_\ve(\la)\Psi\,d\la\,.
\end{equation*}
 An explicit formula for  $P_\ve(\la)$ in terms of the generalized eigenfunctions can be derived in all dimensions. Here we give details only for $d=3$. From a  straightforward calculation it follows that 
\begin{equation*}
 P_\ve(\la)=\sum_{\sigma=\pm}\int_\Omega 
\hat\Phi_\ve^{\sigma}(\la;\omega)
\langle\hat\Phi_\ve^{\sigma}(\la;\omega),\cdot\rangle d\omega
\end{equation*}
where $\Omega$ is the solid angle and
\begin{equation*}
\hat\Phi_{\ve}^\sigma(\la;\omega)=\Phi^{\sigma}(\la;\omega)+
\sum_{\sigma'}\phi^\sigma(\lambda;\omega,0)\,
\big(\Gamma_{\ve,+}(\lambda)\big)^{-1}_{\sigma'\sigma}\,
G_+^{\lambda-\sigma'\alpha} \otimes\chi_{\sigma'}\qquad \la\in[\si\bt,\infty)\,.
\end{equation*}
In the previous formula $\Phi^{\sigma}(\la;\omega)$ are the generalized eigenfunctions of $H$,
\begin{equation*}
\begin{aligned}
\Phi^{\sigma}(\la;\omega):=\phi^\sigma(\la;\om)\otimes\chi_\si\,;\qquad&\phi^\sigma(\la;\om,x)=
\frac{(\lambda-\si\beta)^{\frac{1}{4}}}{4\pi^{\frac{3}{2}}}e^{i\omega\sqrt{\lambda-\si\beta}\,x}\\
&\la\in[\si\bt,\infty)\quad\omega\in\Omega\,.
\end{aligned}
\end{equation*}
while $G_+^{\lambda-\sigma\alpha}$ and $\Gamma_{\ve,+}(\lambda)$ are defined by 
\begin{equation*}
G_+^{\lambda-\sigma\alpha}=
\lim_{\delta\to0^+}G^{\lambda-\sigma\alpha+i\delta}\qquad
\Gamma_{\ve,+}(\lambda)=
\lim_{\delta\to0^+}\Gamma_{\ve}(\lambda+i\delta)\,.
\end{equation*}
The explicit expression of the generalized eigenfunctions of $\hat H_\ve$ reads
\begin{equation*}
\begin{aligned}
\hat\Phi_\ve^{+}(\la;\omega)&=
\frac{(\lambda-\beta)^{\frac{1}{4}}}{4\pi^{\frac{3}{2}}}
\left[e^{i\omega\sqrt{\lambda-\beta}\,\cdot}\otimes\chi_++ \frac{
\frac{\sqrt{\lambda+\beta}}{4\pi i}-|\alpha|}{\left(  \frac{\sqrt{\lambda-\beta}}{4\pi i}-|\alpha|\right)\left(
\frac{\sqrt{\lambda+\beta}}{4\pi i}-|\alpha|\right)-\ve^2}
\frac{e^{i\sqrt{\lambda-\beta}|\cdot|}}{4\pi|\cdot|}\otimes\chi_++\right.\\
&\left.-\frac{\ve}{\left(  \frac{\sqrt{\lambda-\beta}}{4\pi i}-|\alpha|\right)\left(  \frac{\sqrt{\lambda+\beta}}{4\pi
i}-|\alpha|\right)-\ve^2}\frac{e^{i\sqrt{\lambda+\beta}|\cdot|}}{4\pi|\cdot|} \otimes\chi_-\right]\qquad\lambda\geqslant\beta,\,\omega\in\Omega
\end{aligned}
\end{equation*}

\begin{equation*}
\begin{aligned}
\hat\Phi_\ve^{-}(\la;\omega)&=
\frac{(\lambda+\beta)^{\frac{1}{4}}}{4\pi^{\frac{3}{2}}}\left[e^{i\omega\sqrt{\lambda+\beta}\,\cdot}\otimes\chi_-+ \frac{
\frac{\sqrt{\lambda-\beta}}{4\pi i}-|\alpha|}{\left(\frac{\sqrt{\lambda-\beta}}{4\pi i}-|\alpha|\right)\left(
\frac{\sqrt{\lambda+\beta}}{4\pi i}-|\alpha|\right)-\ve^2}
\frac{e^{i\sqrt{\lambda+\beta}|\cdot|}}{4\pi|\cdot|}\otimes\chi_-+\right.\\
&\left.
-\frac{\ve}{\left(\frac{\sqrt{\lambda-\beta}}{4\pi i}-|\alpha|\right)\left(  \frac{\sqrt{\lambda+\beta}}{4\pi
i}-|\alpha|\right)-\ve^2}\frac{e^{i\sqrt{\lambda-\beta}|\cdot|}}{4\pi|\cdot|} \otimes\chi_+\right]
\qquad\lambda\geqslant-\beta,\,\omega\in\Omega
\end{aligned}
\end{equation*}
where we used the fact that $-\sqrt{2\bt}/(4\pi)\leqslant\al<0$ and we posed $\al=-|\al|$. 

We use the following notation for the real and imaginary part of the resonance of  $\hat H_\ve$, $E_\ve^{res}:=b_\ve-i\ga_\ve$. From theorem \ref{teorema3} we have that for $\ve$ small enough $-\bt<b_\ve<\bt$ and $|b_\ve-(\bt-(4\pi \al)^2)|<C\ve^2$, moreover $0<\ga_\ve<C\ve^2$, for some positive constant $C$. 

As we are interested on the decay-rate of a  $\chi_+$-state of the spin we will examine the evolution of the  $\chi_+$ component of the state at any time. Let $I$ be the interval centered in $E_{0,+}$ of amplitude $2 \Delta$, $\ve^2\ll2 \Delta \ll(4\pi \alpha)^2$ . We shall analyze the evolution of the following projector

\begin{equation}
\label{basta}
\mathscr P_{I,++}(t;x,x'):=\frac{1}{64\pi^4}\frac{1}{|x||x'|}
\int_I \frac{\ve^2 \sqrt{\la+\bt}\,e^{-i\la t}\,e^{-\sqrt{\bt-\la}(|x|+|x'|)}}
{\big|\big(\sqrt{\la-\bt}/(4\pi i)-|\al|\big)\big(\sqrt{\la+\bt}/(4\pi i)-|\al|\big)-\ve^2\big|^2}
d\lambda\,.
\end{equation}

Since $\la\in I$,  $\bt-\la>0$  the integrand function in \eqref{basta} is bounded. In turn this implies that the operator $\mathscr P_{I,++}(t)$ is Hilbert-Schmidt for all $t>0$.

Using the substitution $\xi\equiv\frac{\sqrt{\beta-\la}}{4\pi}-|\alpha|$ we can rewrite  (\ref{basta}) as
\[
\mathscr P_{I,++}(t;x,x')=\frac{-\ve^2}{2\pi^2}\frac{e^{-(4\pi)^{2}|\al|(|x|+|x'|)}}{|x||x'|}\,
\int_{I^{'}}\frac{ f(\xi)\,e^{-i\,\left(\beta-(4\pi)^{2}(\xi+|\al|)^{2}\right)\,t}
\,e^{-(4\pi)^{2}\xi(|x|+|x'|)}}
{\xi^{4}+2|\al|\xi^{3}-\frac{2\beta}{(4\pi)^{2}}\xi^{2}-2\ve^{2}|\al|\xi-\ve^{4}}
d\xi
\]
where
\begin{equation*}
f(\xi)= \sqrt{2\bt-(4\pi)^{2}(\xi+|\al|)^{2}}\,(\xi+|\al|)
\end{equation*}
and $I'=\left[\frac{\sqrt{(4\pi|\al|)^{2}-\Delta}}{4\pi}-|\al|,\frac{\sqrt{(4\pi|\al|)^{2}+\Delta}}{4\pi}-|\al|\right]$. The four roots of the denominator of the integrand function are easily analyzed. Two of them are real $\xi_{1}(\ve),\xi_{2}(\ve)\in\mathbb{R}$, close to the roots of  the equation $\xi^{2}+2|\al|\xi-\frac{2\beta}{(4\pi)^{2}}=0$ and are outside the integration region. The last two $\xi_{3}(\ve),\xi_{3}^{*}(\ve)\in\mathbb{C}$ are complex conjugate, close to the roots of the equation $\frac{2\beta}{(4\pi)^{2}}\xi^{2}+2\ve^{2}|\al|\xi+\ve^{4}=0$; the one corresponding to $E_\ve^{res}$ (let us say $\xi_{3}(\ve)$) has positive imaginary part and for $\ve$ small enough $|\xi_{3}(\ve)|\leqslant C\,\ve^2$. With the notation introduced above we can write
\be
\label{subst1}
\mathscr P_{I,++}(t;x,x')=\frac{-\ve^2}{2\pi^2}\frac{e^{-(4\pi)^{2}|\al|(|x|+|x'|)}}{|x||x'|}\,
\int_{I^{'}}\frac{ f(\xi)\,e^{-i\,\left(\beta-(4\pi)^{2}(\xi+|\al|)^{2}\right)\,t}
\,e^{-(4\pi)^{2}\xi(|x|+|x'|)}}
{(\xi-\xi_{1}(\ve))(\xi-\xi_{2}(\ve))(\xi-\xi_{3}(\ve))(\xi-\xi_{3}^{*}(\ve))}
d\xi
\ee

Notice that  the density of states is approximately Lorentzian in $I^{'}$ (for the relevance of a Lorentzian density of states
close to a resonance see \cite{Kin91} and \cite{Kin94}). The difference with a pure Lorentzian behavior (Breit and Wigner for Physicist) is indicated by the presence of a quartic term in the denominator and of a slowly varying function in the numerator of the integrand function. 

In the next theorem we recover the standard result about the time behavior of metastable states. We extract the exponential term and estimate the remainder associated to the non-Lorentzian part of the density of states. 

In the following we shall consider the open subset of $\CO$ defined by $Q=\big\{z\in \CO\big|\; |z|<\diam(I^{'})\,,\;0\leqslant\arg(z)\leqslant\pi\big\}$. The set $Q$ is a semicircle in the upper complex plane, with center in the origin and  having $I^{'}$ as diameter.  

\begin{theorem}
\label{teorema4}
Let  $\mathscr P_{I,++}(t)$ be defined like  in (\ref{basta}), then
\[
\mathscr P_{I,++}(t):=L(t,\ve)+B(t,\ve)
\]
where the following notation was used
\[
L(x,x';t,\ve)=-\frac{i\rho\,\ve^{2}}{\pi(\xi_{3}(\ve)-\xi_{3}^{*}(\ve))}K(x,x')e^{-i\,b_{\ve}\,t}e^{-\,\gamma_{\ve}\,t}
\] 
with
\[
K(x,x')=\frac{e^{-(4\pi)^{2}|\al|(|x|+|x'|)}}{|x|\,|x'|},\quad\rho=\frac{|\al|\sqrt{2\beta-(4\pi|\al|)^{2}}}{\xi_{1}(\ve)\,\xi_{2}(\ve)}
\]
and the following estimate holds true
\[ 
%\label{orabasta}
\|B(t,\ve)\|_{HS}\leq C \ve^{2}\,.
\]
\end{theorem}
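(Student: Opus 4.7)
The strategy is to compute the integral in \eqref{subst1} by contour deformation in the $\xi$-plane. Call the integrand $g(\xi)$. It is meromorphic in a neighborhood of $\overline{Q}$ with a unique pole inside $Q$, namely $\xi_3(\ve)$: the real roots $\xi_{1}(\ve),\xi_{2}(\ve)$ of the quartic denominator lie outside $I'$ and hence outside $\overline{Q}$ because $\Delta\ll(4\pi\al)^2$, while $\xi_3^*(\ve)$ lies in the lower half-plane and $|\xi_3(\ve)|\leqslant C\ve^2\ll\diam(I')$ places $\xi_3(\ve)$ strictly inside $Q$. Applying the residue theorem on the positively oriented boundary of $Q$ (i.e., $I'$ traversed left-to-right followed by the semicircular arc $\Ga_Q$) gives
\begin{equation*}
\int_{I'} g(\xi)\,d\xi = 2\pi i\,\mathrm{Res}_{\xi_3(\ve)}\,g - \int_{\Ga_Q} g(\xi)\,d\xi.
\end{equation*}
I would then define $L(t,\ve)$ as the contribution to $\mathscr P_{I,++}(t)$ coming from the \emph{leading} part of the residue, obtained by replacing $f(\xi_3)\to f(0)$, $(\xi_3-\xi_1)(\xi_3-\xi_2)\to\xi_1\xi_2$ and $e^{-(4\pi)^2\xi_3(|x|+|x'|)}\to 1$, and let $B(t,\ve)$ collect the residue remainder plus the arc integral. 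This identifies $L$ with the expression in the statement, since $e^{-iE_\ve^{\mathrm{res}}t}=e^{-ib_\ve t}e^{-\ga_\ve t}$ and $\rho=f(0)/(\xi_1\xi_2)$.

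The arc contribution is controlled by the uniform-in-$t$ bound
\begin{equation*}
|e^{-i(\bt-(4\pi)^2(\xi+|\al|)^2)t}|=e^{-2(4\pi)^2(\Re\xi+|\al|)\Im\xi\,t}\leqslant 1,\qquad t\geqslant 0,
\end{equation*}
which holds on $\Ga_Q$ because $\Im\xi\geqslant 0$ and $\Re\xi+|\al|>0$ (the latter since $\diam(I')\ll|\al|$). The rational factor $f(\xi)/\prod_j(\xi-\xi_j)$ is continuous on $\Ga_Q$ with denominator bounded from below uniformly in $\ve$, so $\bigl|\int_{\Ga_Q} g\,d\xi\bigr|\leqslant C$ uniformly in $t$. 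The remaining $x,x'$ dependence is carried by $K(x,x')\,e^{-(4\pi)^2\Re\xi(|x|+|x'|)}$, which has finite Hilbert--Schmidt norm because $|\al|+\Re\xi>0$ on $\Ga_Q$; hence the arc part of $B$ is $HS$-bounded by $C\ve^2$.

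For the residue remainder I would expand
\begin{equation*}
\mathrm{Res}_{\xi_3}\,g-\frac{\rho\,e^{-iE_\ve^{\mathrm{res}}t}}{\xi_3-\xi_3^*}=\frac{e^{-iE_\ve^{\mathrm{res}}t}}{\xi_3-\xi_3^*}\left[\frac{f(\xi_3)\,e^{-(4\pi)^2\xi_3(|x|+|x'|)}}{(\xi_3-\xi_1)(\xi_3-\xi_2)}-\frac{f(0)}{\xi_1\xi_2}\right].
\end{equation*}
Since $\xi_3=\OO(\ve^2)$, each of the three elementary differences $f(\xi_3)-f(0)$, $(\xi_3-\xi_1)(\xi_3-\xi_2)-\xi_1\xi_2$ and $e^{-(4\pi)^2\xi_3(|x|+|x'|)}-1$ is $\OO(\ve^2)$, the last one producing an extra linear factor $(|x|+|x'|)$ that is absorbed by the exponential decay in $K(x,x')$ (here one uses $\ve\ll|\al|$, since $\Re\xi_3<0$). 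On the other hand $|\xi_3-\xi_3^*|\geqslant C\ve^2$ because $\Im\xi_3$ is of the same order as $\ga_\ve$ by Theorem \ref{teorema3}. These two powers of $\ve^2$ cancel and, combined with the outer prefactor $-i\ve^2 K(x,x')/\pi$, produce a pointwise bound of the form $C\ve^2\,K(x,x')(1+|x|+|x'|)$ whose HS norm is $\leqslant C\ve^2$. The main technical obstacle is precisely this cancellation: one must verify that the small factor in the bracket really is $\OO(\ve^2)$ \emph{uniformly} in $(x,x')$ and $t$, which rests on the quantitative estimate $\ga_\ve\sim\ve^2$ from Theorem \ref{teorema3} and on keeping $\ve\ll|\al|$ so that the kernel $K$ absorbs the mild exponential growth generated by $\Re\xi_3<0$.
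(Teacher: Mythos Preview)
Your proof is correct and follows essentially the same route as the paper: contour deformation on the semicircle $Q$, extraction of the residue at $\xi_3(\ve)$, and separate HS-estimates for the arc integral and for the difference between the exact residue and its leading part $L$. The paper writes the decomposition as $\mathscr P_{I,++}=P_1-P_2$ with $P_1$ the full residue term and $P_2$ the arc integral, and then sets $B=P_1-L-P_2$; your organization is the same. You are in fact a bit more explicit than the paper on two points it leaves implicit: the lower bound $|\xi_3-\xi_3^*|\geqslant C\ve^2$ needed to cancel the $\ve^2$ in the bracket, and the observation that $\Re\xi_3<0$ forces a mild exponential growth in $e^{-(4\pi)^2\xi_3(|x|+|x'|)}$ which is absorbed by $K(x,x')$ because $|\Re\xi_3|\ll|\al|$.
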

\begin{proof}
The integrand function appearing in (\ref{subst1}) can be analytically continued to $Q$, and by the residues theorem we get
\begin{equation*}
\mathscr P_{I,++}(t):=P_1(t)-P_2(t)
\end{equation*}
where
\begin{equation*}
P_1(t;x,x')=-\frac{i\ve^2\,e^{-i\,b_{\ve}\,t}e^{-\,\gamma_{\ve}\,t}}{\pi(\xi_{3}(\ve)-\xi_{3}^{*}(\ve))}K(x,x')
\frac{ f(\xi_{3}(\ve))
\,e^{-(4\pi)^{2}\xi_{3}(\ve)(|x|+|x'|)}}
{(\xi_{3}(\ve)-\xi_{1}(\ve))(\xi_{3}(\ve)-\xi_{2}(\ve))}
\end{equation*}
and
\begin{equation*}
P_2(t;x,x')=
-\frac{\ve^2}{2\pi^2}K(x,x')
\int_{\partial Q\backslash I^{'}}\frac{ f(z)\,e^{-i\,\left(\beta-(4\pi)^{2}(z+|\al|)^{2}\right)\,t}
\,e^{-(4\pi)^{2}z(|x|+|x'|)}}
{(z-\xi_{1}(\ve))(z-\xi_{2}(\ve))(z-\xi_{3}(\ve))(z-\xi_{3}^{*}(\ve))}
dz\,.
\end{equation*}
Where we used $\xi_3=\sqrt{\beta-b_\ve+i\ga_\ve}-|\al|$. Obviously the operators $P_1(t)$ and $P_2(t)$ depend on $\ve$ even if it is not indicated explicitly. 

We analyze first the term $P_2(t,x,x')$. For all $x,\,x'\in\RE^3$ and for $\ve$ small enough
\begin{equation*}
|P_2(t,x,x')|\leqslant \ve^2 C \,|K(x,x')| e^{-(4\pi)^{2}\inf(I^{'})(|x|+|x'|)}\,.
\end{equation*}
Where we used the fact that for all $z\in \partial Q\backslash I^{'}$, $\Im\big((z+|\al|)^2\big)>0$, and that for $\ve$ small enough 
\[
|f(z)|\leqslant C\qquad\textrm{and}\qquad
|(z-\xi_{1}(\ve))(z-\xi_{2}(\ve))(z-\xi_{3}(\ve))(z-\xi_{3}^{*}(\ve))|^{-1}\leqslant C\,.
\]
Since $|\al|+\inf(I^{'})>0$ we get the estimate $\|P_2(t)\|_{HS}\leqslant C\ve^2$\,.

From the definition of $L(t,\ve)$ and $P_1(t)$ we have 
\[
\begin{aligned}
 |L(t,\ve;x,x')-P_1(t;x,x')|=&
-\frac{i\ve^2\,e^{-i\,b_{\ve}\,t}e^{-\,\gamma_{\ve}\,t}}{\pi(\xi_{3}(\ve)-\xi_{3}^{*}(\ve))}K(x,x')\\
&\bigg[
\frac{|\al|\sqrt{2\beta-(4\pi|\al|)^{2}}}{\xi_{1}(\ve)\,\xi_{2}(\ve)}
-\frac{ f(\xi_{3}(\ve))
\,e^{-(4\pi)^{2}\xi_{3}(\ve)(|x|+|x'|)}}
{(\xi_{3}(\ve)-\xi_{1}(\ve))(\xi_{3}(\ve)-\xi_{2}(\ve))}
\bigg]\,.
\end{aligned}
\]
The estimate $\|L(t,\ve)-P_1(t)\|_{HS}\leqslant C\ve^2$ comes directly from $|\xi_3(\ve)|\leqslant C\ve^2$. 

The proof of the theorem is then obtained by setting $B(t,\ve)=P_1(t)-L(t,\ve)-P_2(t)$.
\end{proof}

As pointed out by many authors (see, e.g., \cite{EXN} and references therein) the estimate given in theorem \ref{teorema4} does not make explicit the short time behavior of the solution. Notice in fact that the error term $B(t,\ve)$ might be in principle much larger than $|L(0,\ve)-L(t,\ve)|$ for 
$0 \leq t \leq 1$. The following representation of 
$\mathscr P_{I,++}(t)$ is more suitable to examine details of the short time evolution of the projector.
\begin{proposition}
\label{teorema5}
Let  $\mathscr P_{I,++}(t)$ be defined like in (\ref{basta}) and $0\leqslant t \ll 1/\ve$, then
\be
\label{uffa}
\mathscr P_{I,++}(t;x,x')= a(x,x') -b(x,x')(1-e^{- i \; E_\ve^{res}\;t}) +c(x,x') \ve^{2} t +\OO( \ve^{2} t^2) d(x,x')
\ee
where 
\[
a(x,x') = -\frac{\ve^2}{2\pi^2}
K(x,x')
\int_{I^{'}}
\frac{ f(\xi)\,e^{-(4\pi)^{2}\xi(|x|+|x'|)}}
{\xi^{4}+2|\al|\xi^{3}-\frac{2\beta}{(4\pi)^{2}}\xi^{2}-2\ve^{2}|\al|\xi-\ve^{4}}
d\xi
\]
\[
b(x,x') = -\frac{\ve^2}{2\pi^2\;(\xi_{3}(\ve)-\xi_{3}^{*}(\ve))}K(x,x')
\frac{f(\xi_3(\ve))\,e^{-(4\pi)^{2}\xi_{3}(\ve)(|x|+|x'|)}}
{(\xi_{3}(\ve)-\xi_{1}(\ve))((\xi_{3}(\ve)-\xi_{2}(\ve))}
\]
\[
c(x,x') = \frac{1}{2\pi^2}K(x,x')
\int_{\partial Q\backslash I^{'}}
\frac{ f(z)\,i\; (\bt-(4\pi)^{2}(z+|\al|)^{2}) \; e^{-(4\pi)^{2}z (|x|+|x'|)}}
{z^{4}+2|\al|z^{3}-\frac{2\beta}{(4\pi)^{2}}z^{2}-2\ve^{2}|\al|z-\ve^{4}}
dz\,.
\]
Moreover
\begin{equation}
\label{bach}
\|a\|_{HS}\leqslant C\;;\qquad\|b\|_{HS}\leqslant C\;;\qquad\|c\|_{HS}\leqslant C
\;;\qquad\|d\|_{HS}\leqslant C\,.
\end{equation}
\end{proposition}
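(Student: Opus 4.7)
The plan is to build the identity (\ref{uffa}) on the contour-deformation already performed in the proof of Theorem \ref{teorema4}. There we obtained the splitting $\mathscr P_{I,++}(t) = P_1(t) - P_2(t)$, with $P_1(t)$ the contribution of the residue at the pole $\xi_3(\ve)$ and $P_2(t)$ the integral along the upper arc $\partial Q\setminus I'$. Because the only $t$-dependent factor in the integrand, evaluated at $\xi=\xi_3(\ve)$, reduces to $e^{-i(\bt-(4\pi)^2(\xi_3+|\al|)^2)t}=e^{-iE_\ve^{res}t}$, I would write $P_1(t) = P_1(0)\,e^{-iE_\ve^{res}t}$ and inspect the explicit formulas to identify $P_1(0)$ with the kernel $b(x,x')$ defined in the statement. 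Evaluation at $t=0$ then gives the useful identity $-P_2(0) = a - b$, since by construction $a(x,x') = \mathscr P_{I,++}(0) = P_1(0) - P_2(0)$.

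Next I would Taylor expand $t\mapsto P_2(t)$ around $t=0$ under the arc integral,
\[
P_2(t) = P_2(0) + t\,P_2'(0) + \tfrac{t^2}{2}\int_0^1 (1-s)\,P_2''(st)\,ds.
\]
Differentiating $e^{-i(\bt-(4\pi)^2(z+|\al|)^2)t}$ in $t$ brings down $-i(\bt-(4\pi)^2(z+|\al|)^2)$, and comparison with the definition of $c$ identifies $-P_2'(0)$ with $c(x,x')\,\ve^2$. Substituting this together with $-P_2(0) = a - b$ into $\mathscr P_{I,++}(t) = P_1(0)\,e^{-iE_\ve^{res}t} - P_2(t)$ yields exactly (\ref{uffa}), with remainder term $-\tfrac{t^2}{2}\int_0^1(1-s)\,P_2''(st)\,ds$. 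On $\partial Q\setminus I'$ the multiplier $\bt-(4\pi)^2(z+|\al|)^2$ is bounded uniformly in $\ve$, the quartic $P(z)$ stays bounded away from zero uniformly in $\ve$, and the oscillating factor $e^{-i(\cdots)\,t}$ has modulus at most $C$ uniformly in $t$ in the regime $t\ll 1/\ve$ (the imaginary part of the exponent is controlled on the upper arc). Hence $P_2''(\tau)$ inherits the $\ve^2$ prefactor and has kernel dominated by $C|K(x,x')|$, producing a remainder of the form $\OO(\ve^2 t^2)\,d(x,x')$ as required.

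For the Hilbert--Schmidt bounds in (\ref{bach}), the key fact is that in three dimensions the kernel $K(x,x')= e^{-(4\pi)^2|\al|(|x|+|x'|)}/(|x||x'|)$ is HS on $L^2(\RE^3)\otimes L^2(\RE^3)$: indeed
\[
\|K\|_{HS}^2 = \left(\int_{\RE^3}\frac{e^{-2(4\pi)^2|\al||x|}}{|x|^2}\,dx\right)^{\!2} < \infty,
\]
since $|x|^{-2}$ is locally integrable in $\RE^3$. The bound $\|b\|_{HS}\leq C$ is then immediate: the near-resonance factor $1/(\xi_3(\ve)-\xi_3^*(\ve)) \sim \ve^{-2}$ is exactly cancelled by the $\ve^2$ prefactor. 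Similarly, $c$ and $d$ are arc integrals over $\partial Q\setminus I'$ where all denominators stay bounded away from zero uniformly in $\ve$, so their kernels are bounded pointwise by $C|K(x,x')|$. The most delicate estimate is $\|a\|_{HS}\leq C$: the integral $\int_{I'} d\xi/|P(\xi)|$ diverges as $\ve^{-2}$ because the complex-conjugate roots $\xi_3(\ve),\xi_3^*(\ve)$ lie within $\OO(\ve^2)$ of $I'$; this divergence is precisely compensated by the explicit $\ve^2$ prefactor in the definition of $a$.

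The main obstacle I expect is the clean extraction of the $\ve^2$ prefactor from the Taylor remainder and the verification that the resulting kernel $d(x,x')$ is HS-bounded uniformly in $t$. This requires interchanging the Taylor remainder with the $z$-integral on the arc and tracking how the $\ve^2$ factor coming from $P_2$ combines with the bounded integrals along $\partial Q\setminus I'$. Once this bookkeeping is in place, the remainder of the argument parallels the HS estimates already used in Theorem \ref{teorema4} and reduces to applications of the integrability of $|K|^2$ on $\RE^3\times \RE^3$.
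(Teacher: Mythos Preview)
Your proposal is correct and follows essentially the same route as the paper: both arguments use the contour deformation from Theorem~\ref{teorema4}, identify the residue at $\xi_3(\ve)$ with the $b$-term, obtain $c$ as the first Taylor coefficient of the arc integral, and bound the HS norms via the square-integrability of $K(x,x')$. The only cosmetic difference is that the paper first subtracts $\mathscr P_{I,++}(0)$ and then deforms the contour of the difference (so the factor $e^{-iE_\ve^{res}t}-1$ appears directly at the pole), whereas you start from the existing splitting $P_1(t)-P_2(t)$ and reach the same identity via $P_1(t)=P_1(0)e^{-iE_\ve^{res}t}$ together with $a=P_1(0)-P_2(0)$; these are algebraically equivalent rearrangements.
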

\begin{proof}
The proof is obtained by writing $\mathscr P_{I,++}(t)$ as $\mathscr P_{I,++}(0)+(\mathscr P_{I,++}(t)-\mathscr P_{I,++}(0))$. Then $a=\mathscr P_{I,++}(0)$ and  
\[
\mathscr P_{I,++}(t)-\mathscr P_{I,++}(0)=-
\frac{\ve^2}{2\pi^2}K(x,x')\,
\int_{I^{'}}\frac{ f(\xi)\,\big(e^{-i\,\left(\beta-(4\pi)^{2}(\xi+|\al|)^{2}\right)\,t}-1\big)
\,e^{-(4\pi)^{2}\xi(|x|+|x'|)}}
{\xi^{4}+2|\al|\xi^{3}-\frac{2\beta}{(4\pi)^{2}}\xi^{2}-2\ve^{2}|\al|\xi-\ve^{4}}
d\xi\,.
\]
Following what was done in the proof of theorem \ref{teorema4} we can use the analytic continuation theory and the residues theorem. Then the term $b$ in \eqref{uffa} corresponds to the contribution from the resonant pole whereas the third term is the derivative in $t=0$ of the integral on the semicircular part of the boundary of $Q$. 

The estimates \eqref{bach} can be easily obtained by an argument similar to the one used in theorem \eqref{teorema4}, for this reason we omit the details. 
\end{proof}

Notice that none of the integral kernels in the representation of $\mathscr P_{I,++}(t)$ given in the proposition correspond to a projection operator.

\vspace{0.3cm}

{\bf Acknowledgments} The authors are grateful to Sergio Albeverio and 
Pavel Exner for the useful comments and for pointing out most of the bibliography  related to the theory of resonances. This work was supported by the Doppler Institute grant (LC06002).\\

\vspace{0.3cm}

\end{document}